\newtheorem{theorem}{Theorem}[section]
\newtheorem{proposition}[theorem]{Proposition}
\newtheorem{remark}[theorem]{Remark}
\newtheorem{lemma}[theorem]{Lemma}
\newtheorem{example}[theorem]{Example}
\newcommand{\dps}{\displaystyle}
\newcommand{\RR}{\mathbb{R}}
\renewcommand{\SS}{\mathbb{S}}
\newcommand{\ri}{\mathrm{i}}
\newcommand{\cS}{\mathscr{S}}
\newcommand{\cD}{\mathcal{D}}
\newcommand{\cE}{\mathcal{E}}
\newcommand{\cC}{\mathcal{C}}
\newcommand{\cR}{\mathcal{R}}
\newcommand{\R}{\mathbb{R}}
\newcommand{\N}{\mathbb{N}}
\newcommand{\Z}{\mathbb{Z}}
\newcommand\bk{{\bold k}}
\newcommand\bR{{\bold R}}
\newcommand\bx{{\bold x}}
\newcommand\by{{\bold y}}
\newcommand\bnull{{\bold 0}}
\def\cC{{\mathcal C}}
\def\cD{{\mathcal D}}
\def\cE{{\mathcal E}}
\def\cF{{\mathcal F}}
\def\cG{{\mathcal G}}
\def\cH{{\mathcal H}}
\def\cI{{\mathcal I}}
\def\cP{{\mathcal P}}
\def\cR{{\mathcal R}}
\def\cS{{\mathcal S}}
\def\fS{{\mathfrak S}}
\def\rd{{\mathrm{d}}}
\def\re{{\mathrm{e}}}
\def\ri{{\mathrm{i}}}
\newcommand\1{{\ensuremath {\mathds 1} }} 
\newcommand{\TF}{{\rm TF}}
\newcommand{\LT}{{\rm LT}}
\newcommand{\rHF}{{\rm rHF}}
\newcommand{\Tr}{{\rm Tr}}
\newcommand{\VTr}{\underline{\rm Tr}}
\newcommand{\norm}[1]{\left\| #1\right\|}
\newcommand{\set}[1]{\left\{ #1\right\}}
\newcommand{\bra}[1]{\left( #1\right)}
\newcommand{\av}[1]{\left| #1\right|}
\newcommand{\com}[1]{\left[ #1\right]}
\newcommand\ii{{\infty}}
\newcommand{\review}[1]{#1}
\title[DFT for 2-d homogeneous materials]{Density Functional Theory for two-dimensional homogeneous materials}
\author{David Gontier}
\address[David Gontier]{CEREMADE, University of Paris-Dauphine, PSL University, 75016 Paris, France}
\email{gontier@ceremade.dauphine.fr}
\author{Salma Lahbabi}
\address[Salma Lahbabi]{ESSM, LRI, ENSEM, UHII, 7 Route d’El Jadida, B.P. 8118 Oasis, Casablanca; 
MSDA, Mohammed VI Polytechnic University, Lot 660, Hay Moulay Rachid Ben Guerir, 43150,
	Morocco}
\email{s.lahbabi@ensem.ac.ma}
\author{Abdallah Maichine}
\address[Abdallah Maichine]{MSDA, Mohammed VI Polytechnic University, Lot 660, Hay Moulay Rachid Ben Guerir, 43150,
	Morocco}
\email{abdallah.maichine@um6p.ma}
\date{\today}
\begin{document}

\begin{abstract}
    We study Density Functional Theory models for systems which are translationally invariant in some directions, such as a homogeneous $2$-d slab in the $3$-d space. We show how the different terms of the energy are modified and we derive reduced equations in the remaining directions. In the Thomas--Fermi model, we prove that there is perfect screening, and provide decay estimates for the electronic density away from the slab. In Kohn--Sham models, we prove that the Pauli principle is replaced by a penalization term in the energy. In the reduced Hartree-Fock model in particular, we prove that the resulting model is well-posed, and give some properties of the minimizer.
    
    
    
\end{abstract}

\maketitle    

\tableofcontents


\section{Introduction}

Density Functional Theory (DFT) was first introduced~\cite{Hohenberg1964, Kohn1965} to study the 
quantum energy of finite systems, such as molecules. It became an important tool also in condensed matter physics, to study infinite systems, such as crystals. Using thermodynamic limit procedure~\cite{Fefferman1985}, it was shown that the energy {\em per unit cell} of a crystal could be computed as the minimization of an explicit (periodic) functional. This was first proved for the Thomas--Fermi model in~\cite{Catto1998_book}, and for the reduced Hartree-Fock model in~\cite{Catto2002}.

In this work, we consider the intermediate case, and study semi-infinite systems, where the system is infinite in $s$ directions, but is localized in $d$ other directions. One can think {\em e.g.} of a nano-wire in three-dimensional space (corresponding to $s = 1$ and $d = 2$), or an infinite slab (corresponding to $s = 2$ and $d = 1$). 
Our interest comes from the recent developments of two-dimensional materials, such as graphene and phosphorene in the physics community~\cite{VdW}. Such systems, have been studied in \review{\cite{Blanc_LeBris_00} in the framework of Thomas--Fermi type models and in}~\cite{Lingling1,Lingling2} in the framework of the reduced Hartree-Fock model. Our objective is to prove and study DFT models for reduced dimension systems in order to allow low computational cost. 

\medskip\noindent

In this work, we focus on the simple case where the system is {\em homogeneous} in its first $s$ variables, and we derive reduced equations in the remaining $d$ variables. More specifically, we consider a positive charge distribution $\mu \ge 0$ in $\R^{s+d}$, representing the semi-infinite homogeneous material, which is translation invariant in the first $s$ dimensions:
\begin{equation} \label{eq:form_mu}
\mu(x_1, x_2, \cdots, x_s, x_{s+1}, \cdots, x_{s+d}) = \mu(0, \cdots, 0, x_{s+1}, \cdots, x_{s+d}).
\end{equation}
We are interested in the state and the (normalized) energy of the electrons in the potential generated by this charge distribution. 

While this hypothesis may look too simple and unrealistic, the model highlights some important features for semi-infinite systems. It may not reproduce the correct physical properties of a real life two-dimensional material, since it does not take into account the microscopic details; however, we think that the effect of these microscopic details fade away from the slab exponentially fast, so that our model reproduces the correct behavior of the density far from the slab. This fact will be the object of future works.

\medskip

Let us briefly explain our main results. We focus on the Thomas--Fermi model and the reduced Hartree-Fock model for simplicity, although our derivation works for general Kohn--Sham models. Also, in what follows, we consider a two-dimensional slab ($s = 2$) in three dimensions ($d = 1$, so that $s + d = 3$), but the techniques apply similarly to other cases.

In the Thomas--Fermi model, the energy only depends on the electronic density $\rho$, which we assume share the same invariance as $\mu$ in~\eqref{eq:form_mu}. Starting from the full three-dimensional periodic Thomas--Fermi energy, we can define an {\em energy per unit surface}, which takes the form
\[
    \cE^\TF(\rho) := c_\TF \int_\R \rho^{5/3} + \frac12 \cD_1(\rho - \mu).
\]
Here, $\cD_1$ is the one-dimensional Hartree energy, that we define below. We minimize the energy $\cE^\TF$ among positive densities $\rho : \R \to \R^{{+}}$ satisfying $\int_\R \rho = \int_\R\mu$. The advantage of the new problem is that the dimension has been reduced: the problem is set on the real line $\R$ only (instead of the full space $\R^3$). We study this reduced model, and prove various properties such as the existence and uniqueness of a minimizer and the perfect screening of dipolar moments. We also prove Sommerfeld estimates, which states that $\rho(x)$ decays as $| x |^{-6}$ away from the slab. 

In the reduced Hartree-Fock model, we prove similarly that one can reduce the dimension of the problem. Starting from the three-dimensional problem set on one-body density matrices $\gamma$ satisfying the Pauli principle $0 \le \gamma \le 1$, we obtain a one-dimensional problem set on self-adjoint operators $G$ acting on $L^2(\R)$, which are postive: $G \ge 0$, but which {\em no longer need to satisfy the Pauli principle $G \le 1$}. Instead, the Pauli principle appears as a penalization term in the energy: the three-dimensional kinetic energy $\frac12 \Tr_3( - \Delta \gamma)$ is replaced by a one-dimensional kinetic energy of the form
\[
    \frac12 \Tr_1({-\Delta}G) + \pi \Tr_1(G^2).
\]
This last term, sometime called the Tsallis entropy, prevents $G$ from having large eigenvalues. \review{In other words, we rigorously derive the Tsallis entropy: it appears here as a weak form of the Pauli principle, coming from the collapse of some dimensions}. The reduced energy then takes the form
\[
    \cE^\rHF(G) := \frac12 \Tr_1({-\Delta}G) + \pi \Tr_1(G^2) + \cD_1(\rho_G - \mu).
\]
\review{Although we restrict ourselves to the reduced Hartree-Fock model for simplicity, similar derivations can be performed for general Kohn--Sham models (see the comments following Theorem~\ref{th:gamma_G}).}

In this paper, we justify both models, and study the existence, uniqueness and properties of the minimizers (denoted by $\rho_\TF$ and $G_*$ respectively). We also provide numerical simulations for the two models. To our surprise, we found out that, with high accuracy, we have $\rho_\TF \approx \rho_{G^*}$. In particular, the Thomas--Fermi model is a very good approximation of the reduced Hartree-Fock model after this reduction of the dimensionality.

\review{As we already mentioned, we believe that these simple toy models do reproduce the correct behavior at infinity of the density and mean-field potential. If true, this would give some indications on how to numerically simulate such slab systems, and in particular on how large the size of the simulation box should be.}

The paper is structured as follows. We state our main results in Section~\ref{sec:main_results}. We prove the results concerning the Thomas--Fermi model in Section~\ref{sec:TF}, and the ones for the reduced Hartree-Fock model in Section~\ref{sec:rHF}. We explain in particular the regularization of the one-dimensional Hartree term in Section~\ref{ssec:Hartree}. Our numerical illustrations are gathered in Section~\ref{sec:numerical}. Finally, we provide in Appendix~\ref{sec:LT} a simple proof of a Lieb-Thirring type inequality, which uses  techniques similar to the one developed in this work.

\subsection*{Acknowledgments} This work has received fundings from a CNRS international cooperation program ({\em Projet International de Collaboration Scientifique}, or PICS, of D.G. and S.L.). The research leading to these results has received funding from OCP grant AS70 ``Towards phosphorene based materials and devices''.


\section{Main results}
\label{sec:main_results}

Let us explain in more details our results. 

\subsection{Main results in the Thomas--Fermi model}

In orbital-free models, such as the Thomas--Fermi (TF) model~\cite{thomas1927calculation, fermi1927metodo, vonWeizsacker1935theorie}, the energy depends solely on the electronic density $\rho$. We refer to~\cite{Lieb1977} for a mathematical study of this model in the molecular case. 

The nuclear density $\mu$ is a positive function which is $\RR^2$-translation invariant, {\em i.e.} it is of the form
$$\mu(x_1, x_2,x_3) = \mu(0, 0, x_{3})= \mu(x_3)\in L^1(\RR).$$
We denote by $Z:=\int_\RR \mu  > 0$ the charge per unit surface, which is not necessarily an integer. 

In order to derive the reduced TF equation, we perform a simple 'thermodynamic limit'. We set the problem on the tube $\Gamma_L := [-\frac{L}{2}, \frac{L}{2}]^2 \times \R$ (with periodic boundary conditions), and we consider neutral systems:
$$
\rho\in L^1(\Gamma_L),\quad \int_{\Gamma_L}\rho=L^2Z.
$$
The supercell TF energy of such a density is
\begin{equation} \label{eq:def:ELTF}
    \cE_L^\TF(\rho) = c_{\rm TF} \int_{\Gamma_L} \rho^{5/3} + \frac12 \cD_{3, L}(\rho - \mu). 
\end{equation}
The first term is the Thomas--Fermi kinetic energy, with $c_{\rm TF} = \frac{3}{10}(3 \pi^2)^{2/3}$ the usual three-dimensional Thomas--Fermi constant, and the second term is the supercell three-dimensional Hartree term, describing  the interaction of the electrons with the charge density $\mu$, as well as a mean-field self-interaction of the electrons. The quadratic form $\cD_{3, L}$ (the subscript $3$ refers to the space dimension $s+d = 3$) is formally defined by
$$
\cD_{3, L}(f) := \iint_{(\Gamma_L)^2} G_L( \bx- \by) f (\bx) f (\by)  \rd \bx \rd \by,
$$
where $G_L$ is the $L$-periodic Green's function solution to
\begin{equation} \label{eq:def:GL}
    - \Delta G_L = 4 \pi\sum_{(R_1, R_2) \in L \Z^2} \delta_{(R_1, R_2, 0)},
\end{equation}
with the periodic boundary conditions $G_L(x_1 + R_1, x_2 + R_2, x_3) = G_L(x_1, x_2, x_3)$ for all $(R_1, R_2) \in L \Z^2$ and the symmetry condition $G_L(x_1, x_2, -x_3) = G_L(x_1, x_2, x_3)$. 

As we prove in Proposition~\ref{prop:TF-equivalence}, this model is related to the one-dimensional TF  energy
\begin{equation} \label{eq:1d_TF}
    \boxed{ \cE^{\rm TF}(\rho) =  c_{\rm TF} \int_{\R} \rho^{5/3} + \frac12 \cD_1(\rho - \mu).}
\end{equation}
This energy is interpreted as an {\em energy per unit surface}. It has the three-dimensional Thomas--Fermi power $5/3$ for the kinetic energy, but the one-dimensional Hartree term, formally defined by
\begin{equation}\label{eq:def-D1}
    \cD_1(f) := - 2 \pi \iint_{\R \times \R} | x- y | f(x ) f( y) \rd x \rd y .
\end{equation}
This definition of $\cD_1$ is only valid for functions that decay fast enough. We give a regularization of this expression suitable for neutral functions in Section~\ref{ssec:1dHartree}. 

The energy~\eqref{eq:def:ELTF} is set on a three-dimensional space $\Gamma_L$, while the energy~\eqref{eq:1d_TF} is set on the line~$\R$. The latter is easier to study both theoretically and numerically. Our first proposition shows that the minimization problems concerning the two energies are equivalent.

\begin{proposition}\label{prop:TF-equivalence} Consider the minimization problems
$$
        \cI^\TF_L := \inf \left\{  \cE_L^\TF(\rho), \quad \rho \in L^1(\Gamma_L)\cap L^{5/3}(\Gamma_L), \; \rho \ge 0,\; \cD_{3, L}(\rho-\mu)<\ii,\; \int_{\Gamma_L} \rho = L^2 Z \right\}
 $$
 and
$$
    \cI^\TF := \inf \left\{ \cE^\TF(\rho), \quad  \rho \in L^1(\R)\cap L^{5/3}(\R), \;\ \rho \ge 0,\; \cD_1(\rho-\mu)<\ii,\; \int_{\R} \rho = Z \right\}.
$$
    Then, for all $L > 0$, we have $\cI^\TF_L = L^2 \cI^\TF$, and both energies share the same minimizer, which depends only on $x_3$.

\end{proposition}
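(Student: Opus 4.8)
The proof rests on two facts. First, for densities that share the $\RR^2$-translation invariance of $\mu$, the supercell energy factorizes exactly as $\cE_L^\TF(\rho) = L^2 \cE^\TF(\rho)$, with the two admissibility sets in correspondence. Second, among admissible densities the value of $\cE_L^\TF$ can only decrease under averaging over the transverse translations, because $\cE_L^\TF$ is convex and translation invariant. Granting these, the two inequalities follow immediately: embedding a one-dimensional admissible $\rho$ as an $x_3$-only density on $\Gamma_L$ is admissible for the supercell problem and gives $\cI_L^\TF \le \cE_L^\TF(\rho) = L^2\cE^\TF(\rho)$ for every such $\rho$, hence $\cI_L^\TF \le L^2\cI^\TF$; the symmetrization of an arbitrary three-dimensional competitor will give the reverse bound.

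\textbf{The factorization.} Let $\rho$ depend only on $x_3$. For the kinetic term, Fubini gives $c_\TF\int_{\Gamma_L}\rho^{5/3} = L^2 c_\TF \int_\R \rho^{5/3}$. For the Hartree term, set $f := \rho - \mu$, which depends only on $x_3$ and is neutral, $\int_\R f = 0$; this uses $\int_{\Gamma_L}\rho = L^2 Z = \int_{\Gamma_L}\mu$. I would use the electrostatic representation $\cD_{3,L}(f) = \frac1{4\pi}\int_{\Gamma_L}|\nabla V|^2$, where $V$ is the $L$-periodic potential solving $-\Delta V = 4\pi f$. Expanding in the transverse Fourier modes $\re^{2\pi\ri(k_1 x_1 + k_2 x_2)/L}$, a source depending only on $x_3$ excites solely the $(k_1,k_2)=(0,0)$ mode, for which the equation reduces to the one-dimensional Poisson equation $-V'' = 4\pi f$ on $\R$, solved by $V(x) = -2\pi\int_\R |x-y| f(y)\,\rd y$. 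Neutrality of $f$ is exactly what guarantees that $V$ is well-defined and that $\nabla V = (0,0,V')$ is square integrable (so that $V'(x)\to 0$ at infinity). One then computes $\int_{\Gamma_L}|\nabla V|^2 = L^2\int_\R |V'|^2 = 4\pi L^2 \cD_1(f)$, using the matching one-dimensional representation of the regularized $\cD_1$ from Section~\ref{ssec:1dHartree}, whence $\cD_{3,L}(\rho-\mu) = L^2\cD_1(\rho-\mu)$ and $\cE_L^\TF(\rho) = L^2\cE^\TF(\rho)$. The same identities show that the finiteness conditions and the $L^1\cap L^{5/3}$ membership transfer between the two problems, so the admissible sets are in bijection.

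\textbf{The symmetrization and conclusion.} For $\mathbf{t} = (t_1,t_2)\in(\R/L\Z)^2$ let $\tau_{\mathbf t}$ denote translation in the first two variables, and given an admissible $\rho$ on $\Gamma_L$ define
\[
    \overline{\rho}(x_3) := \frac{1}{L^2}\int_{[-L/2,L/2]^2} \rho(x_1,x_2,x_3)\,\rd x_1\,\rd x_2,
\]
which depends only on $x_3$, is nonnegative, and satisfies $\int_{\Gamma_L}\overline{\rho} = \int_{\Gamma_L}\rho = L^2 Z$. Since $\mu$ is translation invariant, $\cE_L^\TF\circ\tau_{\mathbf t} = \cE_L^\TF$; since $t\mapsto t^{5/3}$ is convex and $f\mapsto\cD_{3,L}(f)\ge 0$ is a nonnegative quadratic form, the functional $\cE_L^\TF$ is convex. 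Writing $\overline{\rho} = \frac{1}{L^2}\int_{[-L/2,L/2]^2} \tau_{\mathbf t}\rho\,\rd\mathbf{t}$ and applying Jensen's inequality for the convex functional $\cE_L^\TF$ gives $\cE_L^\TF(\overline{\rho})\le \frac{1}{L^2}\int_{[-L/2,L/2]^2} \cE_L^\TF(\tau_{\mathbf t}\rho)\,\rd\mathbf{t} = \cE_L^\TF(\rho)$; the same convexity bound applied to $t\mapsto t^{5/3}$ shows $\overline\rho\in L^{5/3}$, so $\overline{\rho}$ is admissible. By the factorization, $L^2\cI^\TF \le L^2\cE^\TF(\overline{\rho}) = \cE_L^\TF(\overline{\rho})\le \cE_L^\TF(\rho)$ for every admissible $\rho$; taking the infimum over $\rho$ yields $L^2\cI^\TF \le \cI_L^\TF$, which combined with the reverse inequality from the embedding gives $\cI_L^\TF = L^2\cI^\TF$.

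\textbf{Main obstacle.} The delicate point is the Hartree factorization: one must verify that the $L$-periodic Green's function $G_L$ paired with a translation-invariant \emph{neutral} source reduces exactly to the one-dimensional kernel $-2\pi|x-y|$, and that the regularization of $\cD_1$ for neutral densities from Section~\ref{ssec:1dHartree} coincides with the restriction of $\cD_{3,L}$. Handling the zero transverse mode correctly — which is precisely why neutrality is indispensable — is where the care lies. Making the Jensen step rigorous, namely justifying the interchange of the convex functional with the continuous group average and the finiteness of the intermediate integrals, is the other technical point, but it is routine once convexity is in hand.
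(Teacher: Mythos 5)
Your proposal is correct and takes essentially the same approach as the paper: one inequality by averaging over the transverse variables and invoking convexity of the supercell functional, the other by extending a one-dimensional density trivially in $(x_1,x_2)$, with the Hartree factorization resting on the fact that a translation-invariant neutral source excites only the zero transverse Fourier mode of the periodic Green's function. The only difference is presentational: you land directly on the regularized $\cD_1(f)=4\pi\int_\R |W_f|^2$ via the electrostatic (gradient) representation, whereas the paper pairs the explicit kernel of $G_L$ (Proposition~\ref{prop:3D-hartree}) with the source to obtain $\widetilde{\cD_1}$ and then identifies it with $\cD_1$ for neutral densities via Proposition~\ref{prop:1D-hartree}.
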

The proof can be read in Section~\ref{ssec:reduction-TF}. This justifies the reduced one-dimensional problem~\eqref{eq:1d_TF}. We now focus on this reduced model~\eqref{eq:1d_TF}, and prove that it is well-posed, in the sense that it admits a unique minimizer.

\begin{theorem}\label{th:mainTF}
The infinimum $\cI^\TF$ is finite and admits  a unique minimizer $\rho_{\rm TF}$. It is the unique solution to the Thomas--Fermi equation
  \begin{equation}\label{eq:TF-EL}
      \begin{cases}
          \frac{5}{3} c_{\rm TF}  \rho_\TF^{2/3}  = \left( \lambda - \Phi_\TF \right)_+, \\
          -\Phi_\TF'' = 4 \pi (\rho_\TF - \mu),\quad \Phi_\TF'(\pm \infty) = 0 \quad \text{and}\quad \Phi_\TF(0)=0,
      \end{cases}
  \end{equation}
   where $x_+ := \max \{ 0, x\}$, and where $\lambda \in \R$ is chosen so that $\int_\R \rho_\TF = Z$. Here, $\Phi_{\TF}$ is the mean-field potential, defined as the unique solution of the second equation.
\end{theorem}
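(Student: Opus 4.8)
The plan is to run the direct method of the calculus of variations, after first observing that the problem is strictly convex so that uniqueness will be automatic. The structural input I would borrow from the analysis of $\cD_1$ in Section~\ref{ssec:1dHartree} is that on neutral densities $f$ (those with $\int_\R f = 0$) the one-dimensional Hartree term is a non-negative quadratic form; up to a multiplicative constant it is the squared homogeneous Sobolev norm, $\cD_1(f) \simeq \norm{f}_{\dot H^{-1}}^2 \ge 0$, which also fixes the natural function space ($\int_\R f = 0$ being exactly the condition for finiteness). Since every admissible $\rho$ and the background $\mu$ carry the same mass $Z$, the difference $\rho - \mu$ is always neutral, so $\cD_1(\rho-\mu)\ge 0$. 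Combined with the strict convexity of $t \mapsto t^{5/3}$ on $[0,\infty)$, this makes $\rho \mapsto \cE^\TF(\rho)$ strictly convex on the convex admissible set, which yields uniqueness of the minimizer once existence is proved, and will also give uniqueness of solutions to~\eqref{eq:TF-EL}.

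For existence I would take a minimizing sequence $(\rho_n)$. Because $\cD_1(\rho_n - \mu)\ge 0$, the energy bound forces $\int_\R \rho_n^{5/3}$ to stay bounded, so $(\rho_n)$ is bounded in $L^{5/3}(\R)$; with $\rho_n\ge 0$ and $\int_\R \rho_n = Z$ it is bounded in $L^1\cap L^{5/3}$. By reflexivity I extract $\rho_n \rightharpoonup \rho_\TF$ weakly in $L^{5/3}(\R)$, with $\rho_\TF \ge 0$. The kinetic term is convex and strongly continuous on $L^{5/3}$, hence weakly lower semicontinuous, and the Hartree term, being a non-negative quadratic form, is likewise convex and weakly lower semicontinuous, so passing to the limit inferior gives $\cE^\TF(\rho_\TF)\le \cI^\TF$. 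The main obstacle is to rule out loss of mass at infinity, i.e. to show $\int_\R \rho_\TF = Z$ rather than $<Z$. Here the one-dimensional geometry helps rather than hinders: the kernel $-2\pi|x-y|$ grows, so separating charge is energetically expensive (a neutral dipole of charge $m$ separated by distance $R$ already costs $4\pi m R$). Quantitatively, if a fixed amount of mass $m>0$ were to escape to, say, $+\infty$, the localized region would acquire a charge deficit $-m$ and the receding blob a charge $+m$, producing a neutral component whose $\cD_1$-energy diverges linearly in the separation, contradicting the energy bound. I would turn this into a tightness statement for $(\rho_n)$, which upgrades the inequality to $\cE^\TF(\rho_\TF)=\cI^\TF$ with $\rho_\TF$ admissible.

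It remains to derive~\eqref{eq:TF-EL}. Perturbing $\rho_\TF$ by mass-preserving variations and introducing the induced potential $\Phi_\TF$ solving $-\Phi_\TF'' = 4\pi(\rho_\TF - \mu)$, the first variation of $\cE^\TF$ is $\tfrac53 c_\TF \rho_\TF^{2/3} + \Phi_\TF$; stationarity then produces a Lagrange multiplier $\lambda$ with $\tfrac53 c_\TF \rho_\TF^{2/3} + \Phi_\TF = \lambda$ on $\{\rho_\TF>0\}$ and $\ge \lambda$ on $\{\rho_\TF = 0\}$, which is precisely $\tfrac53 c_\TF \rho_\TF^{2/3} = (\lambda - \Phi_\TF)_+$. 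Integrating the Poisson equation once and using neutrality gives $\Phi_\TF'(+\infty)=\Phi_\TF'(-\infty)$, normalized to $\Phi_\TF'(\pm\infty)=0$, while the additive constant left free by $\cD_1$ (which only sees the neutral difference $\rho_\TF-\mu$) is pinned down by $\Phi_\TF(0)=0$; the multiplier $\lambda$ is fixed by $\int_\R \rho_\TF = Z$. Finally, since any solution of~\eqref{eq:TF-EL} is a critical point of the strictly convex constrained functional and hence its unique global minimizer, the solution of~\eqref{eq:TF-EL} is unique and coincides with $\rho_\TF$.
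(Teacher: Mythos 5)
Your proposal is correct in outline and shares the paper's overall skeleton (direct method, strict convexity of $\cE^\TF$ for uniqueness, first variation with a Lagrange multiplier for the Euler--Lagrange system), but it handles the one genuinely delicate step --- conservation of mass in the weak limit --- by a different mechanism than the paper. You propose to prove tightness of the minimizing sequence via the Coulomb cost of charge separation (your dipole heuristic; note the cost scales like $m^2 R$, not $m R$). The paper avoids tightness altogether: it extracts a weak $L^2(\R)$ limit $W$ of the antiderivatives $W_{\rho_n-\mu}$, identifies $W=W_{\rho-\mu}$ by testing against $\psi'$ with $\psi\in C_c^\infty(\R)$, and then observes that membership of $W_{\rho-\mu}$ in $L^2(\R)$ by itself forces $\int_\R(\rho-\mu)=0$, since $W_{\rho-\mu}$ has limits at $\pm\infty$ equal to $0$ and $\int_\R(\rho-\mu)$ respectively. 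This is slicker and simultaneously settles the identification of the limiting Hartree term, which in your scheme is also the subtle point: weak $L^{5/3}$ convergence of $\rho_n$ does not by itself give convergence of $W_{\rho_n-\mu}(x)$ (the indicator $\1_{(-\infty,x]}$ is not an admissible test function), so your lower semicontinuity claim for $\cD_1$ is most cleanly justified exactly by the paper's $W$-identification. Your tightness step, which you only sketch, can in fact be completed with the same object: from $\|W_{\rho_n-\mu}\|_{L^2}\le C$ one finds, in any interval $[R,R+L]$, a point $x_n$ with $|W_{\rho_n-\mu}(x_n)|\le C/\sqrt{L}$, whence $\int_{x_n}^{\infty}\rho_n\le \int_R^\infty\mu + C/\sqrt{L}$, giving uniform control of the tails; so your route works, at the price of an extra argument the paper does not need. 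The Euler--Lagrange derivation and the final uniqueness statement (a solution of~\eqref{eq:TF-EL} is a critical point of a convex functional, hence the global minimizer) match the paper's proof; the paper additionally proves $\Phi_\TF\le\lambda$ via a one-dimensional maximum principle, which is extra information not required for the theorem as stated. Two minor omissions in your write-up: you should check that the admissible set is non-empty (the paper exhibits $\tfrac12\mu*\re^{-|\cdot|}$), and your normalization discussion of $\Phi_\TF'(\pm\infty)=0$ should invoke the boundary behavior of $W_{\rho_\TF-\mu}$, i.e. $\Phi_\TF'=-4\pi W_{\rho_\TF-\mu}$, rather than only "integrating the Poisson equation".
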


The proof of Theorem~\ref{th:mainTF} is presented in Section~\ref{ssec:TF_existence}
and the properties of the mean-field potential $\Phi_\TF$ are detailed in Section~\ref{ssec:TF_properties}. Uniqueness comes from the strict convexity of the energy $\cE^\TF$. We recall that, in dimension one, there is no reference energy (the one-dimensional Green's function does not have a limit as $x \to \infty$), so all potentials and all Fermi levels are defined up to global constants. Only the difference $\Phi_\TF - \lambda$ has a physical meaning, and is called the mean-field Thomas--Fermi potential.

\medskip

Next we study the screening property of the Thomas--Fermi model. We prove that the dipolar moment of $\mu$ is perfectly screened by the TF density $\rho_\TF$. Recall that if $f$ is any function with $\int_\R f = 0$ (neutral), then the potential generated by $f$ is formally given by
\[
    \Phi_f(x)  := - 2 \pi \int_{\R} f(y) | x - y | \rd y.
\]
If $f$ is compactly supported, say in $[-R, R]$, then we have
\[
    \forall x > R, \quad \Phi_f(x) =  2 \pi \int_\R f(y) y \rd y, \quad \text{while} \quad
    \forall x < -R, \quad \Phi_f(x) =  -2 \pi \int_\R f(y) y \rd y.
\]
The {\em dipolar} moment of $f$ is defined as the difference 
\[
    \Phi_f(\infty) - \Phi_f( - \infty) = 4 \pi \int_\R f(y) y \rd y.
\]

The next result states that all dipolar moments are perfectly screened in the Thomas--Fermi model (see Section~\ref{ssec:screening_TF} for the proof).

\begin{proposition}[Screening of dipolar moments]\label{prop:properties_PhiTF}
Assume that $\mu$ satisfies $ | x | \mu(x) \in L^1(\RR)$. Then, the density $\rho_\TF$ satisfies $ | x | \rho_\TF(x) \in L^1(\RR)$ as well, the potential $\Phi_\TF$ is continuous, bounded on $\R$,  satisfies  $\Phi_\TF \leq \lambda$  and 
    \begin{equation} \label{eq:limPhiinfty}
        \lim_{x \to + \infty} \Phi_\TF(x) = \lim_{x \to - \infty} \Phi_\TF(x) = \lambda.
    \end{equation}    
In particular, 
\begin{equation} \label{eq:perfect_screening_TF}
    \int_{\R} x \bra{ \rho_\TF(x) - \mu(x) }\rd x = 0. 
\end{equation}
\end{proposition}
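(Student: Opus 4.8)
The plan is to work throughout with the shifted potential $u := \lambda - \Phi_\TF$, for which Theorem~\ref{th:mainTF} yields $\rho_\TF = C\,(u_+)^{3/2}$ with $C := (3/(5c_\TF))^{3/2}$, together with $u'' = 4\pi(\rho_\TF - \mu)$ and $u'(\pm\infty)=0$. Continuity and $C^1$ regularity of $\Phi_\TF$ come for free: since $\rho_\TF-\mu\in L^1(\R)$, the identity $u'(x)=4\pi\int_{-\infty}^x(\rho_\TF-\mu)\,\rd t$ exhibits $u'$ as an absolutely continuous (hence continuous, bounded) function, and neutrality $\int_\R(\rho_\TF-\mu)=0$ recovers $u'(+\infty)=0$. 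The conclusions of the proposition will then be read off the three properties $u\ge 0$, $u(\pm\infty)=0$, and $|x|\rho_\TF\in L^1(\R)$, proved in that order.

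The first and central step is the sign estimate $\Phi_\TF\le\lambda$, i.e.\ $u\ge 0$, which I would obtain by a one-dimensional maximum principle. On the open set $\{u<0\}$ one has $\rho_\TF=0$, hence $u''=-4\pi\mu\le 0$, so $u$ is concave on each connected component. A bounded component $(a,b)$ is impossible, since a concave function vanishing at both endpoints stays nonnegative inside. A one-sided infinite component $(a,+\infty)$ is impossible, since concavity makes $u'$ nonincreasing with $u'(a^+)\le 0$, which is incompatible with $u'(+\infty)=0$ unless $u\equiv 0$ there; the symmetric argument excludes $(-\infty,b)$, and $\{u<0\}=\R$ is excluded because $\int_\R\rho_\TF=Z>0$. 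Hence $\{u<0\}=\emptyset$.

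Next I would prove $u\to 0$ at $\pm\infty$, which is exactly~\eqref{eq:limPhiinfty}. Since $u\ge 0$ we now have $\rho_\TF=Cu^{3/2}$ everywhere, so $u^{3/2}\in L^1(\R)$; finiteness of the energy gives in addition $u'=-\Phi_\TF'\in L^2(\R)$, through $\int_\R(\Phi_\TF')^2=4\pi\,\cD_1(\rho_\TF-\mu)<\infty$. The easy half is $\liminf_{\pm\infty}u=0$, forced by $\rho_\TF=Cu^{3/2}\in L^1$; upgrading it to $\lim u=0$ requires ruling out oscillatory excursions. If $\limsup_{x\to+\infty}u=M>0$, then $\{u>M/2\}$ has finite measure, so its pieces near infinity have lengths tending to $0$; Cauchy--Schwarz $\int_I|u'|\le |I|^{1/2}\|u'\|_{L^2(I)}$ combined with the vanishing $L^2$-tail of $u'$ forces the heights of these excursions to $0$, a contradiction. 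Thus $u\to 0$, and being continuous with vanishing limits $u$ is bounded, giving the boundedness of $\Phi_\TF$. For the first-moment bound I would use a truncated integration by parts: integrating $xu''=4\pi x(\rho_\TF-\mu)$ over $[0,R]$ gives $4\pi C\int_0^R xu^{3/2}=Ru'(R)-u(R)+u(0)+4\pi\int_0^R x\mu$, and since $\rho_\TF\ge 0$ yields $Ru'(R)\le 4\pi R\int_R^\infty\mu\to 0$ while $|x|\mu\in L^1$, the right-hand side stays bounded as $R\to\infty$; monotone convergence then gives $\int_0^\infty x\rho_\TF<\infty$, and symmetrically on $(-\infty,0]$. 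Finally $|x|\rho_\TF,\,|x|\mu\in L^1$ imply $xu'(x)\to 0$ at both ends (e.g.\ $|x|\int_x^\infty\rho_\TF\le\int_x^\infty|t|\rho_\TF\to 0$), so one more integration by parts gives $\int_\R x(\rho_\TF-\mu)=\tfrac1{4\pi}\int_\R xu''=\tfrac1{4\pi}\big([xu']_{-\infty}^{+\infty}-[u]_{-\infty}^{+\infty}\big)=0$, which is~\eqref{eq:perfect_screening_TF}.

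I expect the delicate point to be the passage $u\to 0$: the lower bound $\liminf u=0$ is immediate from $\rho_\TF\in L^1$, but excluding a strictly positive $\limsup$ genuinely uses the interplay of $u^{3/2}\in L^1$ and $u'\in L^2$. The second subtle point is the truncated integration by parts for $|x|\rho_\TF\in L^1$, where the sign $\rho_\TF\ge 0$ is essential to control the boundary term $Ru'(R)$ under the mere hypothesis $|x|\mu\in L^1$ (one cannot assume a pointwise decay rate for $\mu$). Once these two are secured, continuity, boundedness, $\Phi_\TF\le\lambda$, and perfect screening all follow cleanly.
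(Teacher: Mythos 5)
Your proof is correct and establishes all the conclusions, but by a genuinely different route from the paper's (Section~\ref{ssec:screening_TF}). The paper argues in the opposite order: it first proves $|x|\rho_\TF(x)\in L^1(\R)$ directly from the Newton-type representation \eqref{eq:def:Vf}, namely $\tfrac{1}{4\pi}\Phi_\TF(x)=\int_0^x y(\rho_\TF-\mu)(y)\,\rd y+x\int_x^\infty(\rho_\TF-\mu)(y)\,\rd y$ for $x\ge0$, combined with the bound $\Phi_\TF\le\lambda$ inherited from the proof of Theorem~\ref{th:mainTF} (you legitimately re-derive that bound by re-running the maximum principle of Lemma~\ref{lemma:principe-maximum}); the same representation then yields existence of the limits of $\Phi_\TF$ at $\pm\infty$, and the limits are identified as $\lambda$ by a one-line observation you do not use: by \eqref{eq:TF-EL}, $\rho_\TF$ would converge at infinity to a strictly positive constant if $\Phi_\TF(\pm\infty)<\lambda$, contradicting $\rho_\TF\in L^1(\R)$. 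You instead prove \eqref{eq:limPhiinfty} first, via an excursion argument using only $u^{3/2}\in L^1$ and $u'\in L^2$ (finiteness of $\cD_1(\rho_\TF-\mu)$), and then obtain the first-moment bound from a truncated integration by parts in which the sign of $\rho_\TF$ and the hypothesis $|x|\mu\in L^1$ control the boundary term $Ru'(R)$; your concluding integration by parts is, in substance, the same computation that underlies the paper's formula \eqref{eq:def:Vf}. As for what each approach buys: your proof of \eqref{eq:limPhiinfty} never invokes $|x|\mu\in L^1$, so it shows that $\Phi_\TF\to\lambda$ at $\pm\infty$ under finite energy alone, a genuinely stronger statement; the paper's route is shorter, since once \eqref{eq:def:Vf} and $\Phi_\TF\le\lambda$ are available, the moment bound and the existence of limits follow from a two-line estimate, and identifying the limits through the integrability of $\rho_\TF$ bypasses the $L^2$-excursion analysis entirely.
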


Finally, we prove explicit decay rates of the density $\rho_\TF$ in the case where $\mu$ is compactly supported. The following result can be seen as a one-dimensional version of the Sommerfeld estimates~\cite{sommerfeld1932asymptotische, solovej2003ionization}. In particular, the decay rates of $\rho_\TF$ and $\Phi_\TF$ are independent of the charge $Z$ of the system.

\begin{proposition}[Sommerfeld estimates for a slab] \label{prop:sommerfeld}
    Assume that $\mu$ is a  compactly supported with ${\rm Supp}(\mu) \subset [a, b]$. Then there exist $x_a,x_b\in\RR$ such that 
    $$ 
    \begin{cases}
        \Phi_\TF(x) - \lambda   = - \frac{c_1}{(x-x_a)^4} \text{ for } x\leq a, \\
        \Phi_\TF(x) - \lambda   = - \frac{c_1}{(x-x_b)^4} \text{ for } x\geq b,
    \end{cases}
    \quad \text{and} \quad
    \begin{cases}
         \rho_\TF(x)  = \frac{c_2}{(x-x_a)^6}\text{ for } x\leq a, \\
          \rho_\TF(x)  = \frac{c_2}{(x-x_b)^6}\text{ for } x\geq b.
    \end{cases}
$$
  with the constants
  \[
    c_1 := \frac{5^5c_\TF^3}{27\pi^2}
    \quad \text{and} \quad
    c_2 := \frac{5^6c_\TF^3}{27\pi^3}.
 \] 
\end{proposition}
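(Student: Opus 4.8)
The plan is to work outside the support of $\mu$, where the Thomas--Fermi equation~\eqref{eq:TF-EL} of Theorem~\ref{th:mainTF} decouples into a single autonomous ODE that can be integrated in closed form. I focus on the right region $[b,\infty)$; the region $(-\infty,a]$ is treated symmetrically. Set $\psi := \lambda - \Phi_\TF$, which is nonnegative by Proposition~\ref{prop:properties_PhiTF} (giving $\Phi_\TF \le \lambda$), continuous, and satisfies $\psi \to 0$ and $\psi' = -\Phi_\TF' \to 0$ as $x \to +\infty$ (again by Proposition~\ref{prop:properties_PhiTF} and~\eqref{eq:TF-EL}). On $[b,\infty)$ we have $\mu = 0$, so the first line of~\eqref{eq:TF-EL} gives $\rho_\TF = \left(\tfrac{3}{5c_\TF}\right)^{3/2}\psi^{3/2}$ and the second line becomes the autonomous equation $\psi'' = 4\pi\rho_\TF = K\psi^{3/2}$ with $K := 4\pi\left(\tfrac{3}{5c_\TF}\right)^{3/2}$.

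The next step is to produce a first integral. Multiplying $\psi'' = K\psi^{3/2}$ by $\psi'$ and integrating from $x$ to $+\infty$, the boundary values $\psi(+\infty) = \psi'(+\infty) = 0$ force the integration constant to vanish, yielding $(\psi')^2 = \tfrac{4K}{5}\psi^{5/2}$ on all of $[b,\infty)$. Since $\psi$ is convex ($\psi'' \ge 0$), nonnegative, and tends to $0$ at $+\infty$, it must be non-increasing, so on the set $\{\psi > 0\}$ we take the branch $\psi' = -\sqrt{4K/5}\,\psi^{5/4}$. This is separable: $\left(\psi^{-1/4}\right)' = \tfrac14\sqrt{4K/5}$ is constant, hence $\psi^{-1/4}$ is affine and $\psi(x) = \tfrac{400}{K^2}(x - x_b)^{-4}$ for a constant $x_b$ fixed by the Cauchy data $(\psi(b),\psi'(b))$ coming from the solution on $[a,b]$; one checks $x_b < b$ since $\psi(b) > 0$ and $\psi'(b) < 0$. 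Substituting $K$ gives $\tfrac{400}{K^2} = \tfrac{5^5 c_\TF^3}{27\pi^2} = c_1$, so $\Phi_\TF - \lambda = -\psi = -c_1/(x-x_b)^4$, and then $\rho_\TF = \psi''/(4\pi) = \tfrac{5c_1}{\pi}(x-x_b)^{-6} = c_2/(x-x_b)^6$, matching the stated constants. The left region is identical with the increasing branch, producing the even power $(x-x_a)^{-4}$ with $x_a > a$.

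The one genuine difficulty is the strict positivity $\rho_\TF > 0$ on $[b,\infty)$, i.e.\ $\psi > 0$, since the explicit formula is valid only once we know $\psi$ never vanishes. I would argue this in two stages. First, $\psi$ cannot vanish at a finite point while being positive just before it: if $\psi(x_0) = 0$ with $\psi > 0$ to the left, the first integral forces $\psi' = -\sqrt{4K/5}\,\psi^{5/4}$ there, and integrating shows $\psi^{-1/4}$ is affine, so $\psi$ can reach $0$ only as $x \to +\infty$, contradicting finiteness of $x_0$. Hence either $\psi > 0$ on all of $[b,\infty)$, or $\psi \equiv 0$ there. Second, one must exclude $\psi \equiv 0$ on a half-line, i.e.\ that $\rho_\TF$ is compactly supported on the right with $\Phi_\TF \equiv \lambda$ beyond $b$. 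This is the main obstacle: the first-order optimality condition is degenerate on the contact set $\{\rho_\TF = 0\}$ (there $\tfrac{5}{3}c_\TF\rho_\TF^{2/3} + \Phi_\TF = \lambda$ still holds) and so does not by itself forbid compact support. I expect to rule it out using the global structure of the unique minimizer of $\cE^\TF$ — for instance a maximum-principle/uniqueness argument for the full nonlocal problem, or an explicit competitor lowering the energy — yielding $\Phi_\TF(b) < \lambda$, hence $\psi(b) > 0$, which closes the argument.
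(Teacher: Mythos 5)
Your core computation is exactly the paper's own proof: on $[b,\infty)$ the Euler--Lagrange system becomes the autonomous ODE $\psi''=K\psi^{3/2}$ for $\psi:=\lambda-\Phi_\TF$, the boundary conditions at $+\infty$ kill the constant in the first integral $(\psi')^2=\tfrac{4K}{5}\psi^{5/2}$, convexity and decay select the decreasing branch, and integrating $(\psi^{-1/4})'=\tfrac14\sqrt{4K/5}$ gives the quartic law; your constants $400/K^2=c_1$ and $5c_1/\pi=c_2$ are correct. You have also put your finger on a point the paper passes over in silence: its division by $\left[\lambda-\Phi_\TF\right]^{5/4}$ presupposes $\psi>0$ on $[b,\infty)$, and your dichotomy --- either $\psi>0$ on all of $[b,\infty)$, or $\psi\equiv0$ there --- is correctly proved via the first integral.

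The genuine gap is the step you yourself flag: excluding $\psi\equiv0$ on $[b,\infty)$, i.e.\ excluding that $\rho_\TF$ is compactly supported on the right. Moreover, the two remedies you suggest are the ones least likely to succeed: an energy competitor gives nothing at first order precisely because of the degeneracy you describe ($\Phi_\TF=\lambda$ on the contact set, so the first variation vanishes), and the maximum-principle propagation of positivity used for the neutral TF atom in $\R^3$ fails here because $\R\setminus[a,b]$ is disconnected, so positivity of $\psi$ inside the slab cannot be transported across it. The gap can instead be closed by a local argument at the edge of the support. One may assume $b=\sup\,{\rm supp}\,\mu$ (the conclusion for this $b$ implies it for any larger one). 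Since $\Phi_\TF\le\lambda$, the Euler--Lagrange equation reads $4\pi\rho_\TF=K\psi^{3/2}$ pointwise on all of $\R$, whence $\psi''=K\psi^{3/2}-4\pi\mu$ distributionally on $\R$. If $\psi\equiv0$ on $[b,\infty)$, then $\psi(b)=\psi'(b)=0$, and Taylor expansion with integral remainder gives, for $x<b$,
\[
0\le\psi(x)=\int_x^b(t-x)\left[K\psi^{3/2}(t)-4\pi\mu(t)\right]\rd t
\;\le\;\frac{K}{2}\,(b-x)^2\,M(x)^{3/2},
\qquad M(x):=\max_{[x,b]}\psi .
\]
Applying this inequality at a point $\xi\in[x,b]$ where $\psi(\xi)=M(x)$, and using $M(\xi)\le M(x)$ and $b-\xi\le b-x$, yields $M(x)\le\tfrac{K}{2}(b-x)^2M(x)^{3/2}$, so either $M(x)=0$ or $M(x)\ge 4K^{-2}(b-x)^{-4}$. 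Since $M(x)\to\psi(b)=0$ as $x\to b^-$, the second alternative is impossible near $b$; hence $\psi\equiv0$ on some interval $[b-\delta,b]$, and then $0=\psi''=-4\pi\mu$ there, i.e.\ $\mu=0$ a.e.\ on $(b-\delta,b)$, contradicting $b=\sup\,{\rm supp}\,\mu$. Thus $\psi(b)>0$, and with your dichotomy this closes the argument --- and fills the same implicit assumption in the paper's proof.
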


The proof is provided in Section~\ref{ssec:Sommerfeld}. With the usual Thomas--Fermi constant, $c_\TF = \frac{3}{10}(3 \pi^2)^{2/3}$, we find
$$
c_1=\frac{5^3 3^2}{8} \pi^2\quad \text{and} \quad
c_2 =\frac{5^43^2}{8} \pi.
$$
This result is reminiscent of the usual Sommerfeld estimate for atoms~\cite{sommerfeld1932asymptotische, solovej2003ionization}, where similar results hold, but with the different constants $c_1 = \frac{3^4}{8} \pi^2$ and $c_2 =  \frac{3^5}{8} \pi$.

The previous result allows to have an explicit solution for the Dirac case $\mu = Z \delta_0$ (perfect charged slab).

\begin{example}[The Dirac case]
    In the case where $\mu = Z \delta_0$, the solution is explicit. By convexity, the optimal density $\rho_\TF$ is even. Together with Proposition~\ref{prop:sommerfeld}, we deduce that it is of the form $\rho_\TF(x) = c_2 (| x | + \alpha)^{-6}$. The value of $\alpha$ can be found using {the fact} that $\int \rho_\TF=Z$. This gives
    \[
    Z = 2\int_0^\infty \dfrac{c_2}{(| x | + \alpha)^6} \rd x = \dfrac{2 c_2}{\alpha^5} \int_0^\infty \dfrac{1}{( 1 + y )^6} \rd y = \dfrac{2  c_2}{5 \alpha^5}.
    \]
    Hence
    \[
    \boxed{ \rho_\TF(x) = c_2 \left( | x |   + \left( \tfrac{2 c_2}{5 Z} \right)^{1/5} \right)^{-6}.}
    \]
\end{example}

\subsection{Main results for the rHF model}\label{ssec:main-rHF}
We now turn to the reduced Hartree-Fock model. For the sake of simplicity, we work with spinless electrons,
but our arguments can be extended to models with spin.
In this model, the state of the electrons is described by a one-body density matrix $\gamma$, which is a self-adjoint operator satisfying the Pauli principle
\begin{equation} \label{eq:PauliPrinciple}
	0 \le \gamma \le 1.
\end{equation}
We consider a setting similar to the previous section, that is, the nuclear charge distribution $\mu$ is a positive function that is invariant under translations in the first two variables, and integrable in the third one. 

For $\cH$ a Hilbert space, we denote by $\cS(\cH)$ the set of self-adjoint operators acting on $\cH$. In a tube $\Gamma_L := [ - \frac{L}{2};\frac{L}{2}]^2 \times \R$, the three-dimensional rHF energy of a one-body density matrix $\gamma \in \cS \left(L^2(\Gamma_L)\right)$ is of the form
\begin{equation} \label{eq:def:ErHF_L}
\cE_L^{\rHF}(\gamma) = \frac12 \Tr_{3,L} \left( - \Delta \gamma  \right) + \frac12 \cD_{3, L}(\rho_\gamma - \mu),
\end{equation}
where the last term is the Hartree energy, as in the Thomas--Fermi model, and $\Tr_{3, L}(- \Delta \gamma)$ represents the supercell kinetic energy of $\gamma$. Here and thereafter, the subscripts $3$ or $1$ refer to the space dimension.

The energy~\eqref{eq:def:ErHF_L} needs to be minimized over all density matrices $\gamma$ satisfying the Pauli principle $0 \le \gamma \le 1$, and the neutrality condition $\Tr_{3, L}(\gamma) = ZL^2$. By convexity of the functional, and translation invariance in the first two variables, it is enough to consider density matrices which commute with these translations. Such operators have kernels which satisfy
\begin{equation} \label{eq:assumption_gamma}
	\gamma(x_1, x_2, x_3 ; y_1, y_2, y_3) = \gamma(x_1 - y_1, x_2 - y_2, x_3 ;  0, 0, y_3) =: \gamma(\bx -\by; x_3, y_3).
\end{equation}
\review{We denote by  $\cP$ the set  one-body density matrices $\gamma \in \cS(L^2(\RR^3))$ that satisfy~\eqref{eq:assumption_gamma}, the Pauli principle, and that have a finite trace per unit surface
$$
\VTr_3( \gamma):= \frac{1}{L^2}\Tr_{3, L} \left( \1_{\Gamma_L} \gamma \1_{\Gamma_L}  \right),\quad \forall L\in \RR^+_* $$
(see Section~\ref{ssec:trv} for more details).  The normalized energy (per unit surface) of a state $\gamma\in \cP$ is then 
\begin{equation}\label{eq:energy-rHF-3D}
\cE^{\rHF}_3(\gamma)= \frac12 \VTr_3(-\Delta \gamma) +\frac12 \cD_1(\rho_\gamma-\mu),
\end{equation}
where the quadratic form $\cD_1$ was introduced in~\eqref{eq:def-D1}. 
}
\medskip

The energy $\cE^\rHF_3$ depends on $\gamma$, which is an operator acting on a three-dimensional space.  In order to have an energy depending on an operator acting on a one-dimensional space, we use the following key result, whose proof can be read in Section~\ref{ssec:reduction-rHF}.

\begin{theorem} \label{th:gamma_G}
	For any $\gamma \in \cP$, there is an operator $G$ in 
	\[
	\cG := \left\{ G \in \cS(L^2(\R)), \; G \ge 0, \; \Tr(G) < \infty  \right\},
	\]
	satisfying $\rho_G = \rho_\gamma$ (same density), and
	\begin{equation} \label{eq:ineq:kineticGamma_G}
		\frac12 \VTr_{3}(- \Delta \gamma) \ge \frac12 \Tr_1 \left( - \Delta G \right) + \pi\Tr_1 \left( G^2  \right).
	\end{equation}
	Conversely, for any $G \in \cG$, there is $\gamma \in\cP $ so that $\rho_\gamma = \rho_G$, and for which there is equality in~\eqref{eq:ineq:kineticGamma_G}. \review{In particular, for any (representable) density $\rho$,
    \[
        \inf \left\{ \frac12  \VTr_{3}(- \Delta \gamma), \ \gamma \in \cP, \ \rho_\gamma = \rho \right\} = 
        \inf \left\{ \frac12 \Tr_1 \left( - \Delta G \right) + \pi\Tr_1 \left( G^2  \right), \ G \in \cG, \ \rho_G = \rho \right\}.
    \]}
\end{theorem}

Theorem~\ref{th:gamma_G} allows us to prove that the problem set on three-dimensions coincides with a problem set on the real line. \review{Actually, following the constrained-search approach by Levy and Lieb~\cite{Levy1979, Lieb1983}, we see that for a general Kohn--Sham model of the form $\cE_3^{\rm KS}(\gamma) = \frac12 \VTr_{3}(- \Delta \gamma) + E^{\rm Hxc}(\rho_\gamma)$ ($E^{\rm Hxc}$ is the Hartree exchange-correlation energy per unit surface), we formally have
\begin{align*}
    \inf_{\gamma}   \cE^{\rm KS}_3(\gamma) 
    & = \inf_{\rho} \left\{  \inf_{\gamma \to \rho} \left\{ \frac12  \VTr_{3}(-\Delta \gamma) \right\} + E^{\rm Hxc}(\rho)  \right\} \\
    & = \inf_{\rho} \left\{ \inf_{G \to \rho} \left\{ \frac12 \Tr_1(-\Delta G) + \pi \Tr(G^2) \right\} +  E^{\rm Hxc}(\rho)  \right\} = \inf_G \cE^{\rm KS}(G),
\end{align*}
with a reduced energy per unit surface of the form $\cE^{\rm KS}(G) := \frac12 \Tr_1(-\Delta G) + \pi \Tr_1(G^2) + E^{\rm Hxc}(\rho_G)$. In the above computation, we restricted the minimization to one-body density matrices satisfying the $\RR^2$-translation invariance condition~\eqref{eq:assumption_gamma}. In general Kohn--Sham models, which are non-convex, symmetry breaking may happen, and the optimal $\gamma$ may not be translationally invariant. This is why we restrict ourselves to the (convex) reduced Hartree-Fock model in the sequel and consider the one-dimensional energy per unit surface}
\begin{equation} \label{eq:def:ErHF}
    \boxed{ \cE^\rHF(G) := \frac12 \Tr_1 \left( - \Delta G \right) + \pi \Tr_1 \left( G^2  \right) + \frac12 {\cD_1}(\rho_G - \mu). }
\end{equation}

The fact that the three-dimensional model~\eqref{eq:energy-rHF-3D} and the one-dimensional model~\eqref{eq:def:ErHF} are equivalent is stated in the next Theorem, whose proof is postponed to Section~\ref{ssec:reduction-rHF}.

\begin{theorem} \label{th:rHF_equivalence}
Consider the minimization problems
\[
    \cI^\rHF_3:= \inf\set{\cE_3^\rHF(\gamma),\quad  \gamma\in\cP \quad {\VTr}_{3}(\gamma)=Z}
\]
and
\[
    \cI^\rHF := \inf\set{\cE^\rHF(G),\quad   G \in \cG, \;\Tr_1(G)  = Z  }.
\]
Then $\cI^\rHF_3 = \cI^\rHF$ and the minimizers of both energies share the same density, which depends only on $x_3$.
\end{theorem}

Compared to~\eqref{eq:def:ErHF_L}, the energy~\eqref{eq:def:ErHF} is one-dimensional, which can be efficiently studied both theoretically and numerically.  Note that in the reduced problem, there is no Pauli condition for the operator $G$. It is somehow replaced by the penalty term $\pi \Tr_1 (G^2)$ in the energy, which prevents $G$ from having large eigenvalues. \review{This term is strictly convex, and is sometimes called the Tsallis entropy. Our result shows that this term can be interpreted as an effective Pauli principle, coming from a collapse of some dimensions. We believe that Theorem~\ref{th:gamma_G} can be applied in various situations. For instance, we show in Appendix~\ref{sec:LT} how to use it to obtain a Lieb-Thirring type inequality.} 


\medskip\noindent
We now focus on the reduced problem. We first prove that it is well-posed (see Section~\ref{ssec:rHF_is_well_posed} for the proof).

\begin{theorem} \label{th:main_rHF}
	The infimum~$\cI^\rHF$ is finite and admits a unique minimizer $G_* \in \cG$. This minimizer satisfies the Euler-Lagrange equations
    \[
        \begin{cases}
            \dps G_* = \tfrac{1}{2 \pi} \left( \lambda - H_*  \right)_+ \\
            \dps H_* := - \tfrac12 \Delta + \Phi_*, \\
             -\Phi_*'' = 4 \pi (\rho_* - \mu),\quad \Phi_*'(\pm \infty) = 0 \quad \text{and}\quad \Phi_*(0)=0,
        \end{cases}
    \]
    where $\lambda \in \R$ is the Fermi level chosen such that $\Tr(G_*) = Z$,  $\rho_*:=\rho_{G_*}$ is the associated density to $G_*$ and $\Phi_*$ is the mean-field potential, defined as the unique solution of the last equation.
\end{theorem}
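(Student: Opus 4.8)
The plan is to apply the direct method of the calculus of variations, exploiting that $\cE^\rHF$ is a convex (in fact strictly convex) functional of $G$, and that each of its three terms is nonnegative on the constraint set. Indeed, since $-\Delta \ge 0$ and $G \ge 0$ one has $\Tr_1(-\Delta G) \ge 0$, the penalty $\pi \Tr_1(G^2) \ge 0$ is manifestly nonnegative, and for any admissible $G$ the difference $\rho_G - \mu$ is neutral (because $\Tr_1(G) = Z = \int_\R \mu$), so that the regularized Hartree term of Section~\ref{ssec:1dHartree} satisfies $\cD_1(\rho_G - \mu) \ge 0$. Hence $\cE^\rHF \ge 0$, and testing the energy against a smooth finite-rank trial operator of trace $Z$ shows $\cI^\rHF < \infty$: the infimum is finite. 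For uniqueness, I would note that $G \mapsto \pi\Tr_1(G^2)$ is strictly convex while the kinetic term is linear and the Hartree term is convex (it is a nonnegative quadratic form of $\rho_G$, which is linear in $G$); strict convexity of $\cE^\rHF$ then forces the minimizer, once it exists, to be unique.

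For existence, I would take a minimizing sequence $(G_n) \subset \cG$ with $\Tr_1(G_n) = Z$. The energy bound controls simultaneously $\Tr_1(-\Delta G_n)$, $\Tr_1(G_n^2)$ and $\cD_1(\rho_{G_n} - \mu)$, so $(G_n)$ is bounded in the trace class, in the Hilbert--Schmidt class, and in the form domain of $-\Delta$. Up to extraction one gets a weak-$*$ limit $G_*$ (for which $G_* \ge 0$ is preserved), and by the Hoffmann--Ostenhof inequality $\Tr_1(-\Delta G_n) \ge \int_\R |(\sqrt{\rho_{G_n}})'|^2$ the functions $\sqrt{\rho_{G_n}}$ are bounded in $H^1(\R)$, hence converge locally uniformly after a further extraction. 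The delicate point --- which I expect to be the main obstacle --- is to rule out loss of mass at infinity, i.e.\ to show $\Tr_1(G_*) = Z$ rather than merely $\le Z$. The key tool here is the confining nature of the one-dimensional Hartree term: writing $m_n(x) := \int_{-\infty}^x (\rho_{G_n} - \mu)$ one has, by integration by parts on the regularized form, $\cD_1(\rho_{G_n} - \mu) = 4\pi \int_\R m_n(x)^2\,\rd x$. If a fixed fraction $\delta > 0$ of the electronic mass escaped to $|x| \to \infty$, then $m_n$ would stay close to $\pm\delta$ over an interval whose length diverges, forcing $\cD_1(\rho_{G_n} - \mu) \to \infty$ and contradicting the energy bound. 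This confinement gives tightness of $(\rho_{G_n})$, hence $\rho_{G_n} \to \rho_{G_*}$ strongly enough in $L^1(\R)$ and $\Tr_1(G_*) = Z$, so that $G_*$ is admissible.

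It then remains to pass to the limit in the energy. The kinetic term $\Tr_1(-\Delta G)$ and the Hilbert--Schmidt term $\Tr_1(G^2)$ are weakly lower semicontinuous for the respective operator topologies, while the Hartree term is continuous along the strong density convergence just obtained. Combining, $\cE^\rHF(G_*) \le \liminf_n \cE^\rHF(G_n) = \cI^\rHF$, so $G_*$ is a minimizer; uniqueness was already noted.

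Finally, I would derive the Euler--Lagrange equations from convex optimality. Treating $\lambda \in \R$ as the Lagrange multiplier associated with the constraint $\Tr_1(G) = Z$, the minimizer of the convex functional $\cE^\rHF(G) - \lambda\Tr_1(G)$ over the cone $\{G \ge 0\}$ is characterized by $\nabla\cE^\rHF(G_*) - \lambda \ge 0$ together with the complementary slackness $\big(\nabla\cE^\rHF(G_*) - \lambda\big) G_* = 0$. A direct computation of the G\^ateaux derivative gives $\nabla\cE^\rHF(G_*) = -\tfrac12\Delta + 2\pi G_* + \Phi_*$, where $\Phi_*$ is the mean-field potential solving $-\Phi_*'' = 4\pi(\rho_* - \mu)$, normalized by $\Phi_*(0) = 0$ (the additive constant being absorbed into $\lambda$), with $\Phi_*'(\pm\infty) = 0$ coming from neutrality. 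Writing $H_* = -\tfrac12\Delta + \Phi_*$, so that $\nabla\cE^\rHF(G_*) = H_* + 2\pi G_*$, and diagonalizing simultaneously on the range and kernel of $G_*$, the two optimality conditions read $2\pi G_* = \lambda - H_*$ where $G_* > 0$ and $H_* \ge \lambda$ where $G_* = 0$; these combine into the stated self-consistent relation $G_* = \tfrac{1}{2\pi}(\lambda - H_*)_+$. The Fermi level $\lambda$ is then fixed by monotonicity: $\lambda \mapsto \Tr_1\big(\tfrac{1}{2\pi}(\lambda - H_*)_+\big)$ is nondecreasing, and one selects the value giving $\Tr_1(G_*) = Z$.
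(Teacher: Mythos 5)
Your proposal is correct in its overall architecture and follows essentially the same route as the paper: direct method plus strict convexity from the $\pi\Tr_1(G^2)$ term, then a Lagrange-multiplier/complementary-slackness argument leading to $G_*=\tfrac{1}{2\pi}(\lambda-H_*)_+$. Two steps differ in execution, and the comparison is instructive. For mass conservation, you argue quantitatively: if a fraction $\delta$ of mass escapes, the primitive $m_n=W_{\rho_{G_n}-\mu}$ stays of size $\delta$ on intervals of diverging length, so $\cD_1=4\pi\int|W|^2$ blows up. The paper gets the same conclusion softly: it extracts a weak $L^2$ limit $W$ of $W_{\rho_n-\mu}$, identifies $W=W_{\rho_*-\mu}$ distributionally, and observes that $W_{\rho_*-\mu}\in L^2(\R)$ forces $\int(\rho_*-\mu)=0$, i.e.\ $\Tr_1(G_*)=Z$ — same mechanism, no contradiction argument needed, and it simultaneously delivers the lower semicontinuity of the Hartree term. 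For the Euler--Lagrange equations, you assert the KKT conditions ($h-\lambda\ge 0$ on the positive cone plus $(h-\lambda)G_*=0$, where $h=-\tfrac12\Delta+2\pi G_*+\Phi_*$) as standard convex analysis; the paper proves them concretely by taking $\lambda:=\inf\sigma(h)$, testing the variational inequality against $Z|\phi\rangle\langle\phi|$, and deducing $\Tr_1((h-\lambda)G_*)=0$. Your subsequent block decomposition (on $\overline{\mathrm{Ran}}\,G_*$ one has $2\pi G_*=\lambda-H_*$, on $\ker G_*$ one has $H_*=h\ge\lambda$, and both subspaces are invariant because complementary slackness plus self-adjointness makes $h$ and $G_*$ commute) is a clean packaging of what the paper does by matching eigenvalues of $H_*$ below $\lambda$ with eigenvalues of $G_*$ and invoking $\sigma_{\rm ess}(H_*)=\sigma_{\rm ess}(h)$.

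Two soft spots worth fixing. First, your trial state: a generic smooth finite-rank $G$ of trace $Z$ with, say, compactly supported density need \emph{not} have finite Hartree energy when $\mu\in L^1(\R)$ has slowly decaying tails, since then $W_{\rho_G-\mu}(x)\approx\int_x^\infty\mu$ need not be square-integrable; the paper's choice $\rho=\tfrac12\,\mu*\re^{-|\cdot|}$, $G=|\sqrt\rho\rangle\langle\sqrt\rho|$, avoids this because the tails of $\rho$ match those of $\mu$. Second, you claim the Hartree term is \emph{continuous} along the convergence you obtain; strong $L^1$ convergence of densities gives uniform convergence of the primitives but not $L^2$ convergence of their tails, so continuity is not justified — only lower semicontinuity is, via weak $L^2$ convergence of $W_{\rho_n-\mu}$, and fortunately that is all the argument requires.
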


The operator $H_*$ is the mean-field one-body Schrödinger operator. The uniqueness of the minimizer comes from the strict convexity of the $\cE^\rHF$ functional, thanks to the $\Tr(G^2)$ term. Compared to the {\em usual} Euler-Lagrange equations for the {\em usual} reduced Hartree--Fock model, we see that the optimizer if of the form $G = (\lambda - H)_+$ instead of $\gamma = \1 \left( \lambda - H > 0 \right)$. This regularization comes from the Tsallis entropy $\Tr(G^2)$ term in the energy. One consequence is that, since the map $\lambda \mapsto \Tr_1(\lambda - H)_+$ is {\em strictly} increasing on $[\inf \sigma (H), \infty)$, the Fermi level $\lambda \in \R$ is always uniquely defined.

Compared to the Thomas--Fermi case, one cannot say much on the screening properties of the rHF model. Still, we record the following (see Section~\ref{ssec:props-V-rHF} for the proof).

\begin{proposition} \label{prop:Phi_rHF}
    Assume that $\mu$ satisfies $| x | \mu(x) \in L^1(\R)$. Then the density $\rho_*$ of $G_*$ satisfies $| x | \rho_*(x) \in L^1(\R)$ as well. The potential $\Phi_*$ is continuous, bounded on $\R$, and satisfies
    \[
        \lim_{x \to \pm \infty} \Phi_*(x) =  \pm 4 \pi \int_{\R^\pm} x \left(\rho - \mu \right)(x) \rd x.
    \]
\end{proposition}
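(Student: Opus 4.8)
The plan is to transfer the integrability and boundedness arguments already established in the Thomas-Fermi setting (Proposition~\ref{prop:properties_PhiTF}) to the rHF case, exploiting the Euler-Lagrange equations from Theorem~\ref{th:main_rHF}. The key structural fact is that $-\Phi_*'' = 4\pi(\rho_* - \mu)$ with $\Phi_*'(\pm\infty) = 0$, so that once we control $\rho_*$ we control $\Phi_*$ by explicit integration. Concretely, integrating the Poisson equation once gives $\Phi_*'(x) = -4\pi\int_{-\infty}^x(\rho_* - \mu)$, which is well-defined provided $\rho_* - \mu \in L^1(\R)$ and has total integral zero (neutrality), and the stated limits then follow by integrating $\Phi_*'$ a second time and identifying the boundary values via $\Phi_*'(\pm\infty)=0$.

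First I would establish the weighted integrability $|x|\rho_*(x) \in L^1(\R)$. Here the main work lies: unlike the TF case, $\rho_*$ is the density of an operator $G_* = \frac{1}{2\pi}(\lambda - H_*)_+$ with $H_* = -\frac12\Delta + \Phi_*$, so I cannot read off the decay of $\rho_*$ from a pointwise algebraic relation. Instead I would argue that far from the slab, where $\mu$ has decayed, the effective potential $\Phi_* - \lambda$ becomes negative (one expects $\Phi_* \le \lambda$ on the relevant region, paralleling the TF bound $\Phi_\TF \le \lambda$), so that $H_* - \lambda$ is bounded below by $-\frac12\Delta$ shifted, forcing $(\lambda - H_*)_+$ to be supported on states that decay. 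A clean route is via an Agmon-type or resolvent estimate: in the region where $\Phi_* - \lambda \ge c > 0$, the operator $(\lambda - H_*)_+$ vanishes, and in the transition region one bounds $\rho_*$ through the decay of the integral kernel of $(\lambda - H_*)_+$ built from the exponentially localized eigenfunctions of a Schrödinger operator with a confining-at-infinity potential.

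The second step is to feed $|x|\rho_*(x)\in L^1$ into the potential. Since $|x|\mu(x)\in L^1$ by hypothesis and now $|x|\rho_*(x)\in L^1$, the neutral charge $f := \rho_* - \mu$ satisfies $|x|f(x)\in L^1(\R)$ and $\int_\R f = 0$. Then $\Phi_f(x) = -2\pi\int_\R f(y)|x-y|\,\rd y$ is well-defined, continuous, and bounded, and a direct computation of its limits (splitting $|x-y|$ according to the sign of $x-y$ and using $\int f = 0$) yields $\lim_{x\to\pm\infty}\Phi_f(x) = \pm 2\pi\int_\R y\,f(y)\,\rd y$. Matching the normalization $\Phi_*(0)=0$ and using $\int_\R y f(y)\,\rd y = \int_{\R^+} y f - \int_{\R^-}(-y)f$ rearranges into the stated two-sided formula $\lim_{x\to\pm\infty}\Phi_*(x) = \pm 4\pi\int_{\R^\pm} x(\rho_* - \mu)(x)\,\rd x$. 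I expect the main obstacle to be the first step, namely proving the weighted $L^1$ decay of $\rho_*$ without the algebraic handle available in TF; the cleanest argument should combine the sign information on $\Phi_* - \lambda$ with exponential localization estimates for $H_*$, and care is needed because, in contrast with the screening statement for TF, the dipolar moment here need not vanish, so no perfect-screening shortcut is available.
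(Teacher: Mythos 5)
Your second step is sound and matches the paper: once $|x|\rho_*(x)\in L^1(\R)$ is known, the representation $\Phi_*(x) = 4\pi\left(\int_0^x y(\rho_*-\mu)(y)\,\rd y + x\int_x^\infty(\rho_*-\mu)(y)\,\rd y\right)$ for $x\ge 0$ (and its mirror image on $\R^-$) gives continuity, boundedness and the stated limits exactly as in the Thomas--Fermi screening section, and your normalization bookkeeping between $\widetilde{\cD_1}$-type and regularized potentials is correct. The genuine gap is in your first step. First, the claim ``in the region where $\Phi_*-\lambda\ge c>0$, the operator $(\lambda-H_*)_+$ vanishes'' is false: $(\lambda-H_*)_+$ is defined by functional calculus on the operator $H_*$, not pointwise on the potential, and its density $\rho_*=\frac{1}{2\pi}\sum_j(\lambda-\varepsilon_j)_+|\phi_j|^2$ is positive wherever the bound states live, including where $\Phi_*>\lambda$; only in TF does the density vanish where the potential exceeds the Fermi level. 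Second, the expected bound $\Phi_*\le\lambda$ cannot hold near infinity and the TF maximum principle has no analogue here: since $G_*=\frac{1}{2\pi}(\lambda-H_*)_+$ is trace class, $\lambda$ cannot exceed $\inf\sigma_{\rm ess}(H_*)$, so once the limits of $\Phi_*$ exist they must be $\ge\lambda$ --- the opposite sign of what you assume. Third, and most seriously, your plan is circular: to run Agmon-type localization you need $\Phi_*-\lambda$ to be confining, or at least bounded below by some $c>0$ at infinity, which is precisely the kind of information the proposition is supposed to produce; and in the actual situation no such gap need exist ($\lambda$ may equal $\inf\sigma_{\rm ess}(H_*)$, with eigenvalues accumulating at the threshold and individual decay rates degenerating), while $\rho_*$ is expected to decay only polynomially (numerically it is close to $\rho_\TF\sim|x|^{-6}$), so no uniform exponential estimate of the kind you invoke can be true in general.

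The paper breaks this circularity with a proof by contradiction, which is the missing idea. From the representation above, if $\int_0^\infty y\rho_*(y)\,\rd y=+\infty$, then $\Phi_*(x)\to+\infty$ as $x\to+\infty$, because the $\mu$-contributions are controlled by $2\int_\R|y|\mu(y)\,\rd y$. In that \emph{hypothetical} confining regime Agmon's estimate does apply, and the crucial point of the paper's Lemma is that the constant is uniform over all normalized eigenfunctions with eigenvalue $\le\lambda$: $\int_a^\infty \re^{x-a}|u_i|^2\le C$ for all $i$, whence $\int_a^\infty \re^{x-a}\rho_*\le C\sum_i\frac{1}{2\pi}(\lambda-\varepsilon_i)_+=CZ<\infty$, contradicting $\int_0^\infty y\rho_*=+\infty$. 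Therefore $\Phi_*$ is bounded above, and then the TF-type inequality $\int_0^x y\rho_*(y)\,\rd y\le\frac{1}{4\pi}\sup\Phi_*+2\int_\R|y|\mu(y)\,\rd y$ yields $|x|\rho_*\in L^1(\R)$, after which your second step concludes. So you identified the right tool (exponential localization for $H_*$), but it can only be deployed inside a contradiction argument, not as a direct decay estimate on $\rho_*$.
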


It is unclear that the limits of $\Phi_*$ at 
$+\infty$ and $-\infty$  are equal, which would imply as before the perfect screening of the dipolar moment. Still, our numerical simulations in Section~\ref{sec:numerical} seem to indicate that, even though this term may not be null, it is always very small. In the special case where $\mu$ is even, $\rho_*$ is even as well by convexity, and $\Phi_*(-\infty) = \Phi_*({+}\infty)$: there is no dipolar moment in this case.

\review{In the Thomas--Fermi case, we were able to prove that the density $\rho_\TF$ decays as $| x |^{-6}$ far from the slab whenever $\mu$ is compactly supported (see Proposition~\ref{prop:sommerfeld}). Unfortunately, we were not able to fully characterized the decay of the density in the reduced Hartree-Fock case. The proof of the next result relies on a Bargmann type bound (see Section~\ref{ssec:proof_rhoG}).
\begin{proposition} \label{prop:rhoG}
    Assume that $\mu$ satisfies $| x |^3 \mu(x) \in L^1(\R)$. Then, if $| x |^3 \rho_*(x) \in L^1(\R)$ as well, then $G_*$ is finite rank, and $\rho_*$ is exponentially decaying away from the slab.
\end{proposition}
We think that if $\mu$ decays fast enough, then we always have $| x |^3 \rho_*(x) \in L^1(\R)$. Unfortunately, we were not able to prove this fact. This would imply that $G_*$ is always finite rank, and that $\rho_*$ is always exponentially decaying.}


\begin{remark}[General dimension] \label{rem:general_case_inequality}
	In the general case, where $\gamma$ is an operator acting on $\R^{s+d}$, which is translation invariant with respect to its first $s$--variables, we obtain similar results. Let us emphasize the differences. In Theorem~\ref{th:gamma_G}, the operator $G$ now acts on the last $d$ variables $G \in \cS(L^2(\R^d))$, and~\eqref{eq:ineq:kineticGamma_G} is replaced by
	\[
	\frac12 \VTr_{s+d}( - \Delta_{s+d} \gamma) \ge \frac12 \Tr_d \left( - \Delta_d G\right) + c_{\TF}(s) \ \Tr_d \left( G^{1 + \frac{2}{s}} \right),
	\]
	with the (spinless) Thomas--Fermi constant
	\[
	c_\TF(s) :=  \frac{s}{s+2} \left(\frac{s}{\left| \SS^{s-1} \right|}\right)^{2/s} 2 \pi^2.
	\]
    In Theorem~\ref{th:main_rHF}, the first Euler-Lagrange equation takes the form
	\[
	G_* = \dfrac{s}{(s + 2)} \dfrac{1}{c_\TF(s)} \left(\lambda - H_* \right)_+^{s/2} .
	\]
\end{remark}

\section{Homogeneous 2-d  materials in the  Thomas--Fermi model}
\label{sec:TF}

We start by proving some properties of the three- and one-dimensional Hartree energy. 

\subsection{Hartree interaction}
\label{ssec:Hartree}

\subsubsection{Three-dimensional Hartree interaction}

Let us first give an explicit expression of the three-dimensional Green's function $G_L$ defined in~\eqref{eq:def:GL}.

\begin{proposition}\label{prop:3D-hartree}
    Denoting by $\bx = (x_1, x_2) \in \R^2$ the first two variables, we have
    \[
        G_L(\bx, x_3) =  - \frac{2 \pi }{L^2} | x_3 | +  \frac{ 2 \pi }{L^2}
        \sum_{\bk  \in  \bra{\frac{2 \pi}{L}\Z}^2 \setminus \{ \bnull \}} 
        \frac{\re^{ - \av{\bk} \av{x_3}}}{\av{\bk}}
        \re^{\ri \bk \cdot \bx} + c ,
    \]
    where $c\in\RR$. 
    In particular, if $f(\bx, x_3) = f(\bnull, x_3)$ depends only on the last variable $x_3$, then
    \[
        \int_{\Gamma_L} f(\by, y_3) G_L(\bx - \by, x_3 - y_3) \rd \by \rd y_3 =  - 2 \pi \int_{\R} f(y_3) | x_3 - y_3 | \rd y_3  + c \int_{\R} f(x_3) \rd x_3.
    \]
\end{proposition}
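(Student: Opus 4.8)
The plan is to diagonalize the periodic Laplacian by a Fourier series in the two periodic variables $\bx = (x_1, x_2)$, thereby reducing~\eqref{eq:def:GL} to a family of one-dimensional problems in $x_3$. Since $G_L$ is $L$-periodic in $\bx$, I would write
\[
    G_L(\bx, x_3) = \sum_{\bk \in (\frac{2\pi}{L}\Z)^2} g_\bk(x_3)\, \re^{\ri \bk \cdot \bx},
\]
and expand the source using the Poisson summation formula $\sum_{R \in L\Z^2} \delta_R(\bx) = \frac{1}{L^2}\sum_{\bk} \re^{\ri \bk \cdot \bx}$. Matching Fourier coefficients in $-\Delta G_L = 4\pi \sum_R \delta_{(R,0)}$ then turns the equation into the decoupled one-dimensional problems
\[
    -g_\bk'' + \av{\bk}^2 g_\bk = \frac{4\pi}{L^2}\, \delta_0 \quad \text{in } \cD'(\R),
\]
one for each $\bk$, each supplemented by the evenness condition $g_\bk(-x_3) = g_\bk(x_3)$ inherited from the imposed symmetry of $G_L$.

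For the zero mode $\bk = \bnull$ the equation reads $-g_\bnull'' = \frac{4\pi}{L^2}\delta_0$; among even solutions it admits the one-parameter family $g_\bnull(x_3) = -\frac{2\pi}{L^2}\av{x_3} + c$, which produces both the leading $\av{x_3}$ term and the undetermined additive constant $c$ (no decay condition is available to pin down this mode, which is precisely why $G_L$ is defined only up to a constant). For each $\bk \neq \bnull$ I would look for the bounded even solution $g_\bk(x_3) = A_\bk\, \re^{-\av{\bk}\av{x_3}}$ and fix $A_\bk$ by integrating the distributional equation across $x_3 = 0$, which yields the jump relation $g_\bk'(0^+) - g_\bk'(0^-) = -\frac{4\pi}{L^2}$ and hence $A_\bk = \frac{2\pi}{L^2 \av{\bk}}$. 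Assembling the modes reproduces the stated expansion; I would then check that the series over $\bk \neq \bnull$ converges absolutely for each fixed $x_3 \neq 0$, since its general term decays like $\av{\bk}^{-1}\re^{-\av{\bk}\av{x_3}}$ and, in two dimensions, the number of $\bk$ with $\av{\bk} \approx r$ grows linearly in $r$, so the sum is controlled by $\int_0^\infty \re^{-r\av{x_3}}\, \rd r < \infty$.

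For the second identity I would exploit that $f$ depends only on the last variable. Inserting the Fourier expansion of $G_L$ and integrating over $\by$ in the period $[-\frac{L}{2}, \frac{L}{2}]^2$, the orthogonality relation $\int_{[-L/2,L/2]^2} \re^{-\ri \bk \cdot \by}\, \rd\by = L^2 \delta_{\bk, \bnull}$ annihilates every mode with $\bk \neq \bnull$, leaving only the contribution of $g_\bnull$. Using $g_\bnull(x_3) = -\frac{2\pi}{L^2}\av{x_3} + c$ and performing the remaining $y_3$-integral then gives exactly $-2\pi \int_\R f(y_3)\av{x_3 - y_3}\, \rd y_3 + c \int_\R f$, up to relabelling the arbitrary constant. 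The main obstacle is the rigorous handling of the non-decaying zero mode: $G_L$ is not a genuine tempered object, since it grows linearly in $\av{x_3}$ and its $\bk \neq \bnull$ series diverges on the plane $x_3 = 0$, reflecting the Coulomb $\tfrac1r$ singularity. I would therefore be careful to interpret~\eqref{eq:def:GL} as an identity in $\cD'$ modulo additive constants, and to justify the interchange of summation and integration in the second identity by dominated convergence away from the diagonal $x_3 = y_3$.
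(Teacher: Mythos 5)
Your proposal is correct and follows essentially the same route as the paper: a Fourier expansion in the periodic variables combined with Poisson summation, reduction to decoupled one-dimensional ODEs solved mode by mode (the zero mode yielding $-\tfrac{2\pi}{L^2}|x_3|+c$, the nonzero modes fixed by the jump condition at $x_3=0$ to give $\tfrac{2\pi}{L^2|\bk|}\re^{-|\bk||x_3|}$). Your additional checks --- absolute convergence of the series for $x_3\neq 0$ and the explicit orthogonality argument killing the $\bk\neq\bnull$ modes in the second identity --- are details the paper leaves implicit, but they do not constitute a different approach.
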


\begin{proof}
    Thanks to the periodicity of $G_L$ in the first  two variables, we can make the following Ansatz 
    $$
    G_L(\bx, x_3) = \sum_{ \bk \in \left(\tfrac{2 \pi}{L} \Z \right)^2} c_\bk(x_3) \re^{ \ri \bk \cdot \bx}.
    $$
    Using the two-dimensional Poisson formula in the plane $(x_1, x_2)$
    \[
    \sum_{\bR \in L \Z^2} \delta_\bR(\bx) = \dfrac{1}{L^2} \sum_{\bk \in \left(\tfrac{2 \pi}{L} \Z \right)^2} \re^{\ri \bk \cdot \bx},
    \]
    the equation defining $G_L$ in~\eqref{eq:def:GL} becomes
    \[
        \forall \bk \in \left( \tfrac{2 \pi}{L } \Z \right)^2, \quad -  c_\bk''(x_3) + | \bk |^2 c_\bk(x_3)  = \frac{4 \pi}{L^2} \delta_0(x_3).
    \]
    For $\bk = \bnull$, the above equation gives $- c_\bnull''(x_3) = \frac{4 \pi}{L^2} \delta_0(x_3)$, whose general solution is 
    \[
        c_\bnull(x_3) = -2 \pi \frac{| x_3 |}{L^2} + \lambda x_3 + c,\quad c\in\RR.
    \] 
    By symmetry, we get $\lambda = 0$. For $\bk \neq \bnull$, we solve the equation on $(-\infty,0)$ and $(0,+\infty)$ and obtain that
   \[
        \forall x_3 > 0, \quad c_{\bk}(x_3) =  \alpha_\bk^+ \re^{-| \bk | \cdot |x_3|}, 
        \quad \text{and} \quad
        \forall x_3 < 0, \quad c_{\bk}(x_3) =  \alpha_\bk^- \re^{-| \bk | \cdot |x_3|}.
   \]
   	The symmetry condition implies that $c_\bk(x_3)=c_\bk(-x_3)$, so $\alpha_\bk^+=\alpha_\bk^-$ and $c_{\bk}(x_3) =  \alpha_\bk \re^{-| \bk | \cdot |x_3|}$. We find the value of $\alpha_\bk$ by looking at the singularity at $x_3 = 0$. We must have
    \[
        - c_\bk''  + | \bk |^2 c_\bk = 2 | \bk | \alpha_\bk \delta_0=\frac{4 \pi}{L^2} \delta_0(x_3),
    \]
    so $\alpha_\bk = \frac{2 \pi}{L^2 | \bk |}$, and the result follows.
\end{proof}

Away from the slab $x_3 = 0$, $G_L$ is exponentially close to $-\frac{2\pi}{L^2}\av{x_3}$. The Green's function $G_L$ is defined up to a global constant $c$, that we take equal to $0$ for simplicity. Actually, for neutral system, which is our main interest here, the choice of the constant is irrelevant.

\subsubsection{One-dimensional Hartree interaction}\label{ssec:1dHartree}

We are now interested in the one-dimensional Hartree term. We recall that it is formally given by (we put a tilde here to emphasize that we will soon consider another definition)
\[
    \widetilde{\cD_1}(f) := - 2 \pi \iint_{\R \times \R}  | x - y | f( x ) f ( y)\rd x \rd y.
\]
\begin{remark}
    In the literature, one often considers the one-dimensional Green's function $G_1$, solution to
    \[
    - \Delta G_1 = | \SS^0 | \delta, \quad \text{with} \quad | \SS^0 | = 2,
    \]
    whose solution is the usual $G_1(x) = - | x |$. In this article, our Green's function is rather $G = 2 \pi G_1$, and satisfies $-\Delta G = | \SS^2 | \delta_0$ with $|\SS^2| = 4 \pi$.
\end{remark}

One problem with this expression is that it is not well-defined if $f$ has a slow decay at infinity. Another problem is that the map $f \mapsto \widetilde{\cD_1}(f)$ is not convex in general.
In our context, we consider the Hartree interaction of $f := \rho - \mu$, which has a null integral. We therefore adopt the following definition. 

For $f \in L^1(\R)$, we set
$$
    W_f(x)=\int_{-\ii}^x f(y)\rd y, \quad \text{and} \quad
    \widehat{f}(k):=  \dfrac{1}{(2 \pi)^{1/2}} \int_\R  \re^{ - \ri k x} f(x)\rd x.
$$
The function $W_f$ is a primitive of $f$. We define the regularized version of the one-dimensional Hartree term by
\begin{equation} \label{eq:def:D1}
\cD_1(f) := 4\pi \int_{\R} \frac
{| \widehat{f}(k) |^2}{\av{k}^2}\rd k = 4\pi \int_{\RR}\av{W_f(x)}^2 \rd x.
\end{equation}
This expression is well defined for $f$ in the Coulomb space
$$
f \in \cC := \set{  f \in L^1(\RR),\; W_f \in L^2(\RR)}= \set{  f \in L^1(\RR),\; \frac{\widehat{f }}{\av{\cdot}}\in L^2(\RR)}.
$$
Due to the singularity at $k = 0$, any $f \in \cC$ must be neutral, in the sense
$$
    W(+\infty) = \int_\RR f(x)\rd x =\widehat{f}(0)=0.
$$

The two definitions $\widetilde{\cD_1}$ and $\cD_1$ do not coincide in general: $\cD_1$ is defined for neutral functions, while $\widetilde{\cD_1}(f)$ makes sense whenever $f$ decays sufficiently fast at infinity. The next Proposition shows that they coincide for neutral functions $f$ which decays fast enough.

\begin{proposition} 
\label{prop:1D-hartree}
The following holds.
   \begin{enumerate}
       \item  The map $\cC \ni f \mapsto \cD_1(f)$ is strictly convex.
       \item If $f \in L^1(\RR)$ satisfies $\int_\R f= 0$ and $\av{x}f(x)\in L^1(\RR)$, then $f \in \cC$, and
       \[
       \widetilde{\cD_1}(f) = \cD_1(f).
       \]
       \item  If $f \in \cC$, then we have
       \[
       \cD_1(f) = 4 \pi \iint_{(\R^+)^2 \cup (\R^-)^2} \min \{ | x |, | y | \} f(x) f(y)  \rd x \rd y =
       \int_{\R} \Phi_f(x)  f(x) \rd x ,
       \]
       where $\Phi_f(x) := -4\pi \int_0^x W_f (y)\rd y $ is the mean-field potential, also given by
       \begin{equation} \label{eq:def:Vf}
           \Phi_f(x) := 4 \pi \int_{\R^\pm} \min \{ | x |, | y | \} f(y) \rd y  \quad \text{for} \quad x \in \R^\pm.
       \end{equation}
       The function $\Phi_f$ is continuous, and is the unique solution to
       \[
        - \Phi_f''(x) = 4 \pi f, \quad \Phi_f'(x) \xrightarrow[x \to \pm \infty]{}0, \quad \Phi_f(0) = 0.
       \]
   \end{enumerate}
\end{proposition}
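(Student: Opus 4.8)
The plan is to route all three claims through the single structural identity $\cD_1(f) = 4\pi\,\|W_f\|_{L^2(\R)}^2$, which is essentially built into the definition. Since $W_f' = f$ in the distributional sense, one has $\widehat{f}(k) = \ri k\,\widehat{W_f}(k)$, hence $\av{\widehat f(k)}^2/\av{k}^2 = \av{\widehat{W_f}(k)}^2$, and Plancherel turns the Fourier expression $4\pi\int \av{\widehat f}^2/\av{k}^2$ into $4\pi\int\av{W_f}^2$ (this is also what makes the two descriptions of $\cC$ coincide). At the outset I would record that any $f\in\cC$ is neutral, $\int_\R f = 0$, since $W_f$ is continuous with $W_f(+\infty)=\int_\R f$ and must vanish at $+\infty$ to lie in $L^2$. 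For claim (1), the map $f\mapsto W_f$ is linear and injective ($W_f=0$ forces $f=W_f'=0$), so $\cD_1(f)=0\Rightarrow f=0$; combined with the elementary identity $t\|u\|^2+(1-t)\|v\|^2-\|tu+(1-t)v\|^2 = t(1-t)\|u-v\|^2$ on a Hilbert space, this gives strict convexity of $\cD_1 = 4\pi\|W_\cdot\|_{L^2}^2$.

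For claim (3), I would introduce $\Phi_f(x):=-4\pi\int_0^x W_f$, so that $\Phi_f$ is $C^1$ with $\Phi_f(0)=0$, $\Phi_f'=-4\pi W_f$ and $-\Phi_f''=4\pi f$; since $W_f(\pm\infty)=0$ we get $\Phi_f'(\pm\infty)=0$, and uniqueness of the ODE solution follows because the difference of two solutions is affine and must vanish at $0$ with vanishing derivative at infinity. The $\min$-kernel formula comes from Fubini: writing $W_f(x)=-\int_x^\infty f$ for $x\ge 0$ and $W_f(x)=\int_{-\infty}^x f$ for $x\le 0$, integrating the square yields $\int_0^\infty W_f^2 = \iint_{(\R^+)^2}\min\{\av x,\av y\}f(x)f(y)$ and likewise on $\R^-$, so that
\[
\cD_1(f) = 4\pi \iint_{(\R^+)^2\cup(\R^-)^2} \min\{\av x,\av y\}\, f(x) f(y)\, \rd x\, \rd y .
\]
The same Fubini applied to $\Phi_f(x)=-4\pi\int_0^x W_f$ gives the explicit formula $\Phi_f(x)=4\pi\int_{\R^\pm}\min\{\av x,\av y\}f(y)\,\rd y$ for $x\in\R^\pm$, and substituting it into $\int_\R\Phi_f f$ reproduces exactly the same double integral, i.e.\ $\cD_1(f)=\int_\R\Phi_f f$. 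Equivalently, this last identity is an integration by parts, $\int\Phi_f f = \tfrac1{4\pi}\int(\Phi_f')^2 = 4\pi\int W_f^2$.

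For claim (2), the bound $\min\{\av x,\av y\}\le\av x$ gives $\iint_{(\R^+)^2\cup(\R^-)^2}\min\{\av x,\av y\}\av{f(x)}\av{f(y)} \le \big\||\cdot|f\big\|_{L^1}\|f\|_{L^1}<\infty$, so $W_f\in L^2$ and $f\in\cC$. For the equality $\widetilde{\cD_1}(f)=\cD_1(f)$ I would use the pointwise identity $\min\{\av x,\av y\}\,\mathbf{1}_{\{xy>0\}} = \tfrac12(\av x+\av y-\av{x-y})$, verified on the three sign configurations. Multiplying by $4\pi f(x)f(y)$ and integrating, the $(\av x+\av y)$-part factorizes as $(\int\av x f)(\int f)+(\int f)(\int\av y f)=0$ by neutrality, leaving precisely $-2\pi\iint\av{x-y}f(x)f(y)=\widetilde{\cD_1}(f)$; all integrals converge absolutely under $\av x f\in L^1$.

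The delicate point I anticipate is the rigorous justification of the Fubini interchanges and the vanishing of the boundary terms in the integration by parts for a \emph{general} $f\in\cC$, where only $W_f\in L^2$ and $f\in L^1$ are available and the double integrals need not converge absolutely (one only controls $\av{\Phi_f(x)}=O(\sqrt x)$ and $W_f(x)\to 0$, which is not quite enough to kill $\Phi_f(R)W_f(R)$ directly). I would resolve this by first proving the $\min$-formula and the energy identity on the dense subclass of neutral $f$ with $\av x f\in L^1$ — where, as in (2), everything is absolutely convergent — and then extending to all of $\cC$ by density, using that $\cD_1(f)=4\pi\|W_f\|_{L^2}^2$ is continuous for the Coulomb seminorm $f\mapsto\|W_f\|_{L^2}$. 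By contrast, the potential identities $-\Phi_f''=4\pi f$, $\Phi_f'(\pm\infty)=0$, $\Phi_f(0)=0$ and the continuity and uniqueness of $\Phi_f$ hold for every $f\in\cC$ directly from $\Phi_f'=-4\pi W_f$, with no decay assumption.
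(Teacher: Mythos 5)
Your proof is correct, and its computational core is the same as the paper's: the Fubini/Newton computation producing the $\min$-kernel formula and the potential $\Phi_f$, and the pointwise identity $\av{x}+\av{y}-\av{x-y} = 2\min\{\av x,\av y\}\1_{xy>0}$ combined with neutrality to pass from $\cD_1$ to $\widetilde{\cD_1}$. The differences are at the margins, and mostly in your favor. For membership $f\in\cC$ in point (2), the paper works in Fourier space (since $\av{x}f\in L^1$ and $\int_\R f=0$, $\widehat f$ is differentiable with $\widehat f(0)=0$, so $\widehat f(k)/\av{k}$ has no singularity at $k=0$), whereas you work in real space via Tonelli, $\int_0^\infty W_f^2 \le \iint_{(\R^+)^2}\min\{\av x,\av y\}\av{f(x)}\av{f(y)}\,\rd x\,\rd y \le \big\| \av{\cdot} f\big\|_{L^1}\|f\|_{L^1}$; both are valid, and yours has the advantage of producing exactly the absolute-convergence bound needed for the Fubini step of point (3). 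For point (3) itself, the paper in fact carries out its entire proof of points (2)--(3) under the extra hypothesis $\av{x}f(x)\in L^1(\R)$ and is silent about a general $f\in\cC$; you explicitly isolate that difficulty (non-absolute convergence of the double integrals, uncontrolled boundary term $\Phi_f(R)W_f(R)$) and propose a density argument, while correctly observing that the ODE characterization of $\Phi_f$ needs no decay at all. One caveat on your density step: density in the Coulomb seminorm shows that $\cD_1$ is the continuous extension of the $\min$-form from the nice subclass, but it cannot literally yield the pointwise identity for \emph{every} $f\in\cC$, since for such $f$ the right-hand sides need not even exist as Lebesgue integrals; if you want the general statement, a cleaner route is integration by parts on $[-R_n',R_n]$ along a sequence with $\sqrt{R_n}\,\av{W_f(R_n)}\to 0$ (which exists because $W_f\in L^2$), killing the boundary term. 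Since the paper itself never treats this case, your proposal matches, and in places exceeds, its level of rigor.
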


\begin{proof}
    The first point comes from the Fourier representation of ${\cD_1}$, which involves a strictly positive kernel $| k |^{-2}$. Let us prove the other  two points. Let $f\in L^1( \RR)$ be such that $\av{x} f(x)\in L^1(\RR)$ and $\int_\RR f = 0$. First, we see that
    \[
        \widehat{f}(k) = \widehat{f}(0) + ( \widehat{f} ) '(0) k + o(k)
     \]
     with
     \[
        \widehat{f}(0)=\frac{1}{(2 \pi)^{1/2}} \int_{\R} f(x)  \rd x = 0, \quad \text{and} \quad
        (\widehat{f})'(0) = \frac{1}{(2 \pi)^{1/2}} \int_{\R} x f(x)  \rd x \,\in \R.
    \]
    This proves that $\frac{\widehat{f}(k)}{| k |}$ is indeed in $L^2(\R)$ (there is no singularity at $k = 0$), so $f \in \cC$.
Besides, we have for $x\geq 0$, using that $\int_\R f = 0$ and Fubini,
\begin{align}
 \Phi_f(x)&=-4\pi \int_0^x \int_{-\ii}^y f(t)\rd t \,\rd y= 4\pi \int_0^x \int_y^{+\ii} f(t)\rd t  \,\rd y=4\pi\int_{(\RR^+)^2}\1_{y\leq\min\set{ x,t}} f(t)\rd t  \,\rd y \nonumber \\
 &=4\pi\int_{\RR^+}\min\set{ x,t}f(t)\rd t =4\pi\bra{\int_0^x t f(t) \rd t  + x\int_x^{+\ii}f(t)\rd t }. \label{eq:expression_Phi}
\end{align}
The last equality is somehow a one-dimensional version of Newton's theorem. A similar equality holds for $x\leq 0$. A similar computation shows that
$$
\cD_1(f)=4\pi \int_\RR \av{W_f}^2=4 \pi \iint_{(\R^+)^2 \cup (\R^-)^2} \min \{ | x |, | y |\}\, f(x)f(y)\,\rd x\,\rd y.
$$
Therefore 
$$
\cD_1(f) = \int_\R \Phi_f(x) f(x) \rd x.
$$
Finally, to prove that this expression is also $\widetilde{\cD_1}(f)$ for $f\in\cC$ satisfying $\av{x}f(x)\in L^1(\RR)$, we remark that
    \[
        | x | + | y | - | x - y | = \begin{cases}
            2\min \{\av{ x}, \av{y }\} & \quad \text{on} \quad (\RR^+)^2\cup(\RR^-)^2 \\
            0 & \quad \text{otherwise}.
        \end{cases}
    \]
    This gives
    \begin{align*}
        \cD_1(f) & = 2 \pi \int_{\R^2} \left( | x | + | y | - | x - y |  \right) f(x) f(y) \rd x \rd y \\
        &=        2 \pi \int_\R | x | f(x) \rd x \int_\R f(y) \rd y + 2 \pi \int_\R | y | f(y) \rd y  \int_\R f(x) \rd x + \widetilde{\cD_1}(f),
    \end{align*}
    and the  first two terms vanish since $\int_\R f = 0$ and $\av{x}f(x)\in L^1(\RR)$.
    \end{proof}

\subsection{Reduction of the Thomas Fermi model: Proof of Proposition~\ref{prop:TF-equivalence}}\label{ssec:reduction-TF}

In this section, we prove Proposition~\ref{prop:TF-equivalence}: we justify that the three-dimensional Thomas--Fermi problem equals its one-dimensional version. Recall that we defined
\[
    \cE_{3,L}^\TF(\rho) := c_{\rm TF} \int_{\Gamma_L} \rho^{5/3} + \frac12 \cD_{3,L}(\rho - \mu) 
    \quad \text{and} \quad
    \cE^\TF(\rho) := c_{\rm TF} \int_{\R} \rho^{5/3} + \frac12 \cD_1(\rho - \mu).
\]
We also introduce 
\[
    \widetilde{\cE^\TF}(\rho) := c_{\rm TF} \int_{\R} \rho^{5/3} + \frac12 \widetilde{\cD_1}(\rho - \mu).
\]
Let $\rho : \Gamma_L \to \R^+$ be a test three-dimensional density. We define
 $$
\widetilde{ \rho}(x_3):=\frac{1}{L^2}\int_{\com{-\frac{L}{2},\frac{L}{2}}^2}{\rho}(\bx,x_3)\,\rd \bx. 
 $$
 By convexity of the $\cE_{3,L}^\TF$ functional, we have $\cE^{\TF}_{3, L}(\widetilde{\rho}) \le \cE^\TF_{3, L}(\rho)$. In addition, using Proposition~\ref{prop:3D-hartree}, we see that 
 \begin{align*}
 \cD_3(\widetilde{\rho}-\mu)&=-2\pi\int_{\com{-\frac{L}{2},\frac{L}{2}}^2}\int_\RR\int_{\RR} (\widetilde{\rho}-\mu)(x_3)(\widetilde{\rho}-\mu)(y_3)\av{x_3-y_3} \,\rd y_3 \,\rd x_3  \\
 &=L^2 \widetilde{\cD_1}(\widetilde{\rho}-\mu).
 \end{align*}
 Therefore  $\cE_{3,L}^\TF({\rho}) \geq  \cE_{3,L}^\TF(\widetilde{\rho}) = L^2 \widetilde{\cE^\TF}(\widetilde{\rho})$. On the other hand, if $\rho : \R \to \R^+$ is a one-dimensional density, one can extend $\rho$ in the three dimensions setting $\rho(x_1, x_2, x_3) := \rho(x_3)$, and we have $\cE_{3,L}^\TF(\rho)  = L^2 \widetilde{\cE^\TF}(\rho)$. This proves that $\inf \cE_{3,L}^\TF = L^2 \inf \widetilde{\cE^\TF}$ and both energies share the same minimizer.

\subsection{Existence of minimizers: Proof of Theorem~\ref{th:mainTF}}
\label{ssec:TF_existence}

As we said before, the problem with the $\widetilde{\cD_1}$ Hartree term turns out to be quite difficult to study. 
In what follows, we rather study the problem $\cE^\TF$ instead of $\widetilde{\cE^\TF}$, that is with the regularized $\cD_1$ Hartree term instead of $\widetilde{\cD_1}$. Still, we prove in this section that if $| x | \mu(x) \in L^1(\R)$, then the optimal density $\rho$ satisfies $| x | \rho(x) \in L^1(\R)$ as well. In particular, for this density, we have $\cE^\TF(\rho)= \widetilde{\cE^\TF}(\rho)$.

\medskip\noindent

We therefore focus on the one-dimensional Thomas--Fermi minimization problem
\begin{equation*} 
\inf \set{\cE^{\rm TF}(\rho),\; \rho\in \cR}, \quad \text{with} \quad
\cR := \left\{ \rho \in L^1(\R) \cap L^{5/3}(\R), \quad \rho - \mu \in \cC, \quad \rho \ge 0 \right\},
\end{equation*}
and we prove Theorem~\ref{th:mainTF}.

\subsubsection{Existence and uniqueness of minimizer}
We first prove that the problem is well posed, and admits a unique minimizer.

We start by noting that ${\cR}$ is not empty: for instance, we have $\frac{1}{2}\mu*{e^{-\av{\cdot}}}\in{\cR}$. Since $\cD_1$ is a positive quadratic form on ${\cC}$, the energy functional $\cE^\TF$ is positive on $\cR$, thus bounded from below.  Let $(\rho_n)_n$ be a minimizing sequence in $\cR$. In particular, $(\rho_n)_n$ is bounded in $L^{1}(\RR)\cap L^{5/3}(\RR)$ and $(W_{\rho_n-\mu})_n$ is bounded in $L^2(\R)$. Up to sub-sequences, there exist $\rho\in L^{1}(\R)\cap L^{5/3}(\RR)$ and $W\in L^2(\R)$ such that $\rho_n\rightharpoonup\rho$ and $W_{\rho_n-\mu}\rightharpoonup W$ weakly in $L^{1}(\R)\cap L^{5/3}(\RR)$  and $L^2(\R)$ respectively. Let us prove that $W = W_{\rho - \mu}$. For a test function $\psi\in C_c^\infty(\R)$, we have
\begin{align*}
\langle W,\psi'\rangle &=\lim_{n\to\infty}\langle W_{\rho_n-\mu},\psi'\rangle
 = -\lim_{n\to\infty} (\rho_n-\mu,\psi) = -\langle\rho-\mu,\psi\rangle.
\end{align*}
We deduce that $W' = \rho-\mu$ in the distributional sense, so $W(x)=W_{\rho-\mu}(x) + c$ for some constant $c$. Since $W\in L^2(\R)$, we have $c=\lim_{-\infty}W(x)=0$ hence $W = W_{\rho-\mu}$ as wanted. This implies the neutrality condition $\int_\R \rho(x) \rd x=\int_\R \mu(x) \rd x =Z$, so $\rho\in \cR$.  By the lower semi-continuity of the $L^{5/3}(\R)$ and the $L^2(\R)$ norms, we obtain 
$$
\cE^\TF(\rho)\leq \liminf \cE^\TF(\rho_n)=\inf \set{\cE^\TF(\rho),\; \rho\in \cR}. 
$$
Hence $\rho$ is a minimizer. Uniqueness follows from the strict convexity of the $\cE^\TF$ functional.

\subsubsection{The Euler-Lagrange equations}
In what follows, we denote by $\rho_\TF$ the optimal density. We prove in this section that $\rho_\TF$ satisfies the Euler-Lagrange equations~\eqref{eq:TF-EL}. First, we have
    \[
    \forall h \in C_0^\infty(\R), \ \int_\R h = 0, \ \rho + h \ge 0, \  \forall t \in [0, 1], \quad  {\cE^\TF}(\rho_\TF + t h)  \ge {\cE^\TF}(\rho_\TF).
    \]
    Differentiating at $t = 0$ and using that
    \[
        \frac12 \cD_1(\rho_\TF + t h - \mu) = \frac12 \cD_1(\rho_\TF - \mu) + t \int_\R \Phi_{\TF} h + o(t^2), 
    \]
    where $\Phi_\TF := \Phi_{\rho_\TF - \mu}$, we obtain
    \begin{equation} \label{eq:firstTFeqt}
         \forall h \in C_0^\infty(\R), \ \int_\R h = 0, \ \rho + h \ge 0, 
         \quad
         \int_\R \left(  \frac{5}{3}  c_\TF  \rho_\TF^{2/3}(x) + \Phi_\TF(x) \right) h(x) \rd x \ge 0.
    \end{equation}
    As in~\cite{Lieb1977}, we see that on the set $\{x\in \RR, \;  \rho(x) > 0\}$, $h$ can locally takes positive and negative values, so, on this set, we must have $ \frac{5}{3}  c_\TF  \rho_\TF^{2/3}(x) + \Phi_\TF(x) = \lambda$ for some $\lambda \in \R$, called the Fermi level. In particular, we have $\Phi_\TF < \lambda$ on this set. On the set $\{x\in \RR, \; \rho(x) = 0\}$, $h$ can  take only positive values, and we deduce that $\Phi_\TF \ge \lambda$. This gives the usual Thomas--Fermi equation
    \begin{equation} \label{eq:eqt_rho_Phi}
        \frac{5}{3}  c_\TF  \rho_\TF^{2/3}(x)  = \left[ \lambda - \Phi_\TF(x) \right]_+,
    \end{equation}
    where $[ f]_+ := \max \{ 0, f\}$. 
    The same reasoning as in~\cite{Lieb1977} shows that if a density satisfies the TF equation~\eqref{eq:eqt_rho_Phi}, then it is the unique minimizer of the TF energy functional. 
    
    Since $\Phi_\TF$ is continuous, the density $\rho_\TF$ is also continuous. Let us prove that $\Phi_\TF \le \lambda$. We recall the following maximum principle in one-dimension.

    \begin{lemma}\label{lemma:principe-maximum}
        Let $V : \R \to \R $ be a continuous function such that: 
        \begin{itemize}
            \item for any $x$ such that $V(x)\geq 0$, we have $V''(x)\geq 0$,
            \item $V'\to 0$ at $\pm\ii$.
        \end{itemize}
        Then $V\leq 0$ or $V$ is constant.
    \end{lemma}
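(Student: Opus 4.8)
The plan is to argue by a dichotomy on the superlevel set. Assume that $V$ is not everywhere $\le 0$, and introduce the open set $\Omega := \{x \in \R : V(x) > 0\}$, which is then nonempty; the goal is to show that the hypotheses can only be consistent with $\Omega \neq \emptyset$ when $V$ is constant. The central observation is that on $\Omega$ one has $V > 0 \ge 0$, so the first hypothesis gives $V'' \ge 0$ there, and hence $V$ is convex on each connected component of $\Omega$. Since $\Omega$ is an open subset of $\R$, each component is an open interval: either bounded, a proper half-line, or all of $\R$. I would rule out the first two types and show that the third forces $V$ to be constant.

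First consider a bounded component $(a,b) \subset \Omega$. By continuity of $V$ and maximality of the component one has $V(a) = V(b) = 0$; convexity of $V$ on $[a,b]$ together with these vanishing endpoint values forces $V \le 0$ on $[a,b]$, contradicting $V > 0$ on $(a,b)$. Next consider a right half-line $(a, +\infty) \subset \Omega$ with $a$ finite, so again $V(a) = 0$. Convexity makes $V'$ nondecreasing on $(a,+\infty)$, and since $V'(x) \to 0$ as $x \to +\infty$ by the second hypothesis, this forces $V' \le 0$ throughout, i.e. $V$ is nonincreasing; hence $V(x) \le V(a) = 0$ for $x > a$, a contradiction. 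The symmetric argument, using $V'(x) \to 0$ as $x \to -\infty$, rules out a left half-line $(-\infty, b)$.

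Having excluded all proper components, the only remaining possibility compatible with $\Omega \neq \emptyset$ is $\Omega = \R$. In that case $V$ is convex on the whole line, so $V'$ is nondecreasing, while $V'(x) \to 0$ at both $\pm\infty$; squeezing the nondecreasing function $V'$ between its two limits gives $V' \equiv 0$, whence $V$ is constant. This establishes the claimed dichotomy: either $V \le 0$ everywhere, or $V$ is constant.

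The individual steps are elementary, and the point I expect to require the most care is the regularity under which ``$V'' \ge 0$'' and the limits $V'(\pm\infty) = 0$ are to be read, since in the intended application $V = \Phi_\TF - \lambda$ solves $-V'' = 4\pi(\rho_\TF - \mu)$ with $\mu$ merely integrable. I would therefore phrase the convexity of $V$ on $\Omega$ distributionally (a continuous function whose distributional second derivative is nonnegative is convex), and interpret the monotonicity of $V'$ via left and right derivatives, so that the endpoint comparison $V(x) \le V(a)$ and the final squeezing of $V'$ remain valid without assuming $V \in C^2$.
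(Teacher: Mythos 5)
Your proof is correct and is essentially the paper's own argument: your case analysis on the connected components of $\Omega=\{V>0\}$ (bounded interval, half-line, all of $\R$) is exactly the paper's four-case analysis on the interval $(x_m,x_M)$ around a point where $V>0$, with the same use of convexity, endpoint values, and the squeezing of the nondecreasing $V'$ between its zero limits. Your closing remark on reading $V''\ge 0$ distributionally is a sensible extra precaution for the application to $\Phi_\TF-\lambda$, but it does not change the structure of the argument.
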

    Before proving Lemma~\ref{lemma:principe-maximum}, we show how to use it to conclude that $\Phi_\TF \le \lambda$. We set $V:=\Phi_\TF-\lambda$. For any $x$ such that $V(x)\geq 0$, we have $\rho_\TF(x)=0$ from~\eqref{eq:eqt_rho_Phi}, hence $V''(x)= \Phi_\TF''(x) = -4\pi(\rho(x)-\mu(x))=4\pi\mu(x)\geq0$. Besides $V'=\Phi_\TF'\to 0$ at $\pm\ii$ by Proposition~\ref{prop:1D-hartree}. Thus $V$ satisfies the conditions of the lemma and we conclude that either $V$ is constant or that $V=\Phi_\TF - \lambda\le 0$. If $V$ is constant, then so is $\rho_\TF$ and since $\rho_\TF$ is integrable, then $\rho_\TF=0$, which is not possible as $\int_{\R}\rho_{\TF}=Z>0$. We conclude that $\Phi_\TF\leq \lambda$. 
    
    Now, the Euler-Lagrange equation can be written as
    $$
        \frac{5}{3}  c_\TF  \rho_\TF^{2/3}(x)  +\Phi_\TF(x) = \lambda . 
    $$
    It remains to provide the:
    \begin{proof}[Proof of Lemma~\ref{lemma:principe-maximum}]
        Let us assume that there is $x_0\in\RR$ such that $V(x_0)>0$. Let $x_m\in \RR\cup\set{-\ii}$ and $x_M\in \RR\cup\set{+\ii}$ defined by
        $$
            x_m=\sup\set{x\leq x_0, \; V(x)\leq 0}\quad \text{and}\quad x_M=\inf\set{x\geq x_0, \; V(x)\leq 0}.
        $$
        By continuity of $V$, the open interval  $I=(x_m,x_M)$ is not empty. On this interval we have $V(x)>0$ thus $V'' (x) \geq 0$. There are 4 possibilities:
        \begin{enumerate}
            \item $x_m,x_M\in\RR$. In this case $V(x_m)=V(x_M)=0$ by continuity. As $V$ is convex on $I$, it follows that $V\leq 0$ on $I$ by the maximum principle, a contradiction;
            \item $x_m\in \RR$ and $x_M=+\ii$. In this case $V(x_m)=0$ by continuity. On $I$, $V''\geq 0$, thus $V'$ is non decreasing. Besides $V'\to 0$ at $+\ii$. Therefore $V'\leq 0$ on $I$. As $V(x_m)=0$, it follows that $V\leq 0$  on $I$, a contradiction;  
            \item $x_m=-\ii$ and $x_M\in\RR$. This case is treated as the previous one;
            \item $x_m=-\ii$ and $x_M=+\ii$. In this case $V''\geq 0$, thus $V'$ is non decreasing on $\RR$. However, $V'\to 0$ at $\pm \ii$ then $V'=0$, that is $V$ is constant.
        \end{enumerate}
        We conclude that $V\leq 0$ on $\RR$ or is constant. 
    \end{proof}

\subsection{Properties of the TF density and mean-field potential}
\label{ssec:TF_properties}

 In this section, we give some extra properties of $\rho_\TF$ and $\Phi_\TF$.

\subsubsection{Screening of dipolar moments: Proof of Proposition~\ref{prop:properties_PhiTF}}
\label{ssec:screening_TF}

In what follows, we assume that $| x | \mu(x) \in L^1(\R)$ (the first moment of $\mu$ is finite). Let us prove that $| x | \rho_\TF(x) \in L^1(\R)$ as well. We have, by Equation~\eqref{eq:def:Vf}, for $x \ge 0$, that
\begin{equation} \label{eq:formula_PhiTF}
    \frac{1}{4 \pi} \Phi_\TF(x) = \int_0^\infty (\rho_\TF - \mu)(y) \min \{ x, y \} \rd y 
    = \int_0^x y (\rho_\TF - \mu)(y) \rd y + x \int_x^\infty (\rho_\TF - \mu)(y) \rd y .
\end{equation}
This gives
\[
    \int_0^x y \rho_\TF(y) \rd y \le \frac{1}{4 \pi} \Phi_\TF(x) +  \int_0^x y \mu(y) \rd y + x \int_x^\infty \mu (y) \rd y
    \le \frac{\lambda}{4\pi} + 2 \int_0^\infty | y | \mu(y) \rd y,
\]
where we used that $x \mu(y) \le y\mu(y)$ for $y \ge x$. This proves that $\int_{\R^+} x \rho(x) < \infty$. We can prove a similar result for $x \le  0$, which proves $| x | \rho_\TF(x) \in L^1(\R)$.

In particular, we have, for $x \ge 0$,
\[
    x \int_x^\infty (\rho_\TF - \mu)(y) \rd y \le \int_x^\infty y  (\rho_\TF - \mu)(y) \rd y \xrightarrow[x \to +\infty]{} 0.
\]
Together with~\eqref{eq:formula_PhiTF}, we obtain
\[
    \lim_{x \to \infty} \Phi_\TF(x) = 4 \pi  \int_{\R^+} y  (\rho_\TF - \mu)(y) \rd y 
    \quad \text{and} \quad
     \lim_{x \to -\infty} \Phi_\TF(x) = - 4 \pi  \int_{\R^-}  y  (\rho_\TF - \mu)(y) \rd y.
\]
In particular, $\Phi_\TF$ have limits at $\pm \infty$. Moreover, since $\Phi_\TF$ is continuous, we deduce that $\Phi_\TF$ is bounded. In addition, by the Euler Lagrange equation~\eqref{eq:TF-EL}, we have $\rho_\TF(x) \to (\lambda- \Phi_\TF(\pm \infty))^{2/3}$ as $x\to\pm\infty$. However, since $\rho_\TF$ is integrable, we must have $\Phi_\TF(+\infty) = \Phi_\TF(-\infty) = \lambda$. Therefore, the total dipolar moment is null:
\[
    0 = \Phi_\TF({+}\infty) - \Phi_\TF( - \infty) = 4 \pi \int_\R y \left(\rho - \mu\right)(y) \rd y.
\]
This proves Proposition~\ref{prop:properties_PhiTF}.


\subsubsection{Sommerfeld estimates when $\mu$ is compactly supported. Proof of Proposition~\ref{prop:sommerfeld}}
\label{ssec:Sommerfeld}
We now consider the special case where $\mu$ is compactly supported, say in the interval $[a, b]$, and prove the Sommerfeld estimates in Proposition~\ref{prop:sommerfeld}. Outside of $[a,b]$, $\Phi_\TF$ satisfies
\begin{equation} \label{eq:def:TFeqt_outside_mu}
\Phi_\TF''(x) = - 4 \pi \left( \frac{3}{5 c_\TF} \left[ \lambda - \Phi_\TF(x)  \right] \right)^{3/2}.
\end{equation}
We solve this ordinary differential equation explicitly. First, on the interval $(b, +\ii)$, we multiply~\eqref{eq:def:TFeqt_outside_mu} by $\Phi_\TF'$ and integrate to obtain that
\[
    \frac12 \left| \Phi_{\TF}' \right|^2(x) = \left( \frac{3}{5 c_\TF} \right)^{3/2} \frac{8 \pi}{5} \left[\lambda  - \Phi_\TF(x)\right]^{5/2} + cst.
\]
As $x \to \infty$, we have $\Phi_\TF(x) \to \lambda$ and $\Phi_\TF'(x) \to 0$, so the integration constant is null. In addition, since $\Phi''_\TF=-4\pi \rho \le 0$, $\Phi_\TF'$ is decreasing, and goes to $0$ at infinity, hence $\Phi_\TF' \ge 0$ on $[b, \infty)$. Taking square roots gives
\[
    \dfrac{ \Phi'_\TF(x)}{ \left[ \lambda - \Phi_\TF(x)  \right]^{5/4}} = \frac{4 \sqrt{\pi}}{\sqrt{5}}  \left( \frac{3}{5 c_\TF} \right)^{3/4}.
\]
Integrating a second time shows that there is $x_b \in \R$ so that
\[
    \dfrac{1}{\left[\lambda - \Phi_\TF(x) \right]^{1/4}} =  \frac{ \sqrt{\pi}}{\sqrt{5}}  \left( \frac{3}{5 c_\TF} \right)^{3/4} (x  - x_b).
\]
So, for all $x \ge b$, we have
\[
    \Phi_\TF(x) = \lambda - \dfrac{c_1}{ ( x - x_b )^4} , \quad \text{with} \quad
    c_1 := \dfrac{5^5 c_{\TF}^3}{3^3 \pi^2}.
\]
In addition, since $\Phi_\TF'' = - 4 \pi \rho$, we obtain that, for all $x \ge b$, we have
\begin{equation} \label{eq:solution_rhoTF}
    \rho(x) = \dfrac{c_2}{(x - x_b)^6}, \quad \text{with} \quad
    c_2 = \frac{5c_1}{\pi}.
\end{equation}
We have similar results on $(-\ii,a)$. In particular, we obtain
\[
    \lim_{x \to \infty} \left[ \lambda - \Phi_\TF(x)\right] | x |^4 = -c_1, 
    \quad \text{and} \quad
    \lim_{x \to \infty} | x |^6 \rho(x) = c_2.
\]
As we already mentioned, these limits are independent of the system under consideration. This concludes the proof of Proposition~\ref{prop:sommerfeld}.

\section{Homogeneous 2-d  materials in the reduced Hartree-Fock model}
\label{sec:rHF}
We now prove our results concerning the rHF model. 


\subsection{Trace and kinetic energy per unit-surface}\label{ssec:trv}
In this subsection, we define both the trace per unit surface $\VTr$ and the kinetic energy per unit surface. For $\bR \in \R^2$, we denote by $\tau_\bR$ the translation operator on $L^2(\R^{3})$ given by $(\tau_\bR f)(\bx,x_3) = f(\bx - \bR, x_3)$.
Another way to write ~\eqref{eq:assumption_gamma} for a density matrix $\gamma$ is
\begin{equation} \label{eq:assumption_gamma_bis}
	\tau_\bR \gamma = \gamma \tau_\bR,\qquad \text{for all}\; \bR \in \R^2. 
\end{equation}
For $\gamma$ satisfying this condition, we define the trace per unit surface
\[
\VTr_3(\gamma) := \Tr_3 \left( \1_\Gamma \gamma \1_\Gamma  \right),
\]
where $\Gamma$ is the tube $[-\frac12, \frac12]^2 \times \R$. If, in addition, $\gamma$ is locally trace-class, with density $\rho_\gamma$, then $\rho(x_1, x_2, x_3) = \rho(x_3)$ depends only on the third variable and
\[
\VTr_3(\gamma) = \int_{\R} \rho_\gamma(x_3) \rd x_3.
\]
The space of admissible states $\cP$ is defined by
\[
\cP :=  \left\{ \gamma \in \cS(L^2(\R^3)) : \; \ 0 \le \gamma \le 1, \; \VTr(\gamma) < \infty \,\text{and} \,  \tau_\bR \gamma = \gamma \tau_\bR \; \text{for all}\; \bR \in \R^2 \right\}.
\]

Since the elements of $\cP$ commute with all $\R^2$--translations, we can apply Bloch-Floquet theory~\cite[Section XIII--16]{ReedSimon4} (see also \cite{Lingling1}). Let $\cF$ be the partial Fourier transform defined on $C^\infty_0(\R^{3})$ by
\[
\left( \cF f \right)(\bk, z) := \frac{1}{2 \pi} \int_{\R^2} \re^{- \ri \bk \cdot \by} f(\by,z) \rd \by,
\]
and extended by density to $L^2(\R^3)$. The map $\cF$ is unitary on $L^2(\R^3)$. Since $\gamma \in \cP$ commutes with $\R^2$-translations, we have
\begin{equation} \label{eq:def:gammak}
	\cF \gamma \cF^{-1} = \int_{\R^2}^\oplus \gamma_\bk \rd \bk,
\end{equation}
that is, for all $ f \in L^2(\R^3)$,
\[ 
\left( \cF \gamma f \right)(\bk, \cdot) =  \gamma_\bk \left[ \left(\cF f \right) (\bk, \cdot )\right] .
\]
Here, $(\gamma_\bk)_{\bk \in \R^2}$ is a family of self-adjoint operators acting on $L^2(\R)$. In terms of kernels, we formally have
\begin{equation} \label{eq:gamma-ker=gamma_k-ker}
	\gamma(\bx,x_3 ; \by,y_3)=\frac{1}{(2\pi)^2}\int_{\R^2}\re^{-\ri \bk \cdot (\bx-\by)}\gamma_\bk(x_3,y_3) \rd \bk.
\end{equation}
In particular,
\begin{equation} \label{eq:gamma_gammak}
	\rho_\gamma(x_3)=\frac{1}{(2\pi)^2}\int_{\RR^2}\rho_{\gamma_\bk}(x_3) \rd \bk,
	\quad \text{and} \quad
	\VTr_3 \left( \gamma \right) = \dfrac{1}{(2 \pi)^2} \int_{\R^2} \Tr_1 \left(\gamma_\bk\right) \rd \bk.
\end{equation}
Finally, we have
$$ \cF \Delta_3 \cF^{-1} = | \bk |^2 + \Delta_1,$$ 
thus, the kinetic energy per unit surface of $\gamma$ is given by
\begin{equation} \label{eq:def:kineticEnergy_perUnitSurface}
	\frac12 \VTr_3( - \Delta_3 \gamma) = \frac12 \dfrac{1}{(2 \pi)^2} \int_{\R^2} \left( | \bk |^2 \Tr_1(\gamma_\bk) + \Tr_1( - \Delta_1 \gamma_\bk ) \right) \rd \bk.
\end{equation}

\subsection{Reduced states: Proof of Theorem~\ref{th:gamma_G}}\label{ssec:reduction-rHF}
 For $\gamma \in \cP$, we associate the reduced operator $G_{\gamma} \in \cS(L^2(\R))$ defined by (compare with~\eqref{eq:def:gammak}: the superscript $\oplus$ is no longer here)
\begin{equation} \label{eq:def:G_gamma}
	\forall f \in L^2(\R), \quad (G_\gamma f) := \dfrac{1}{(2 \pi)^2} \int_{\R^2} (\gamma_\bk f) \rd \bk,
\end{equation}
The kernel of $G_{\gamma}$ is given by (compare with~\eqref{eq:gamma-ker=gamma_k-ker})
\[
    G_{\gamma}(x_3, y_3) = \frac{1}{(2\pi)^2}\int_{\R^2} \gamma_\bk(x_3,y_3) \rd \bk.
\]

Since $\gamma$ is a positive operator, so are its fibers $\gamma_\bk$, hence $G_{\gamma} \ge 0$ as well. In addition, we have
\[
\rho_{G_{{\gamma}}} = \dfrac{1}{(2 \pi)^2} \int_{\R^2} \rho_{\gamma_\bk} \rd \bk = \rho_\gamma, 
\quad \text{and} \quad 
\Tr_1(G_\gamma) =  \dfrac{1}{(2 \pi)^2} \int_{\R^2} \Tr_1( \gamma_\bk ) \rd \bk = \VTr_3(\gamma).
\]
This proves that $G_{{\gamma}} \in \cG = \left\{ G \in \cS(L^2(\R)), \ G \ge 0, \ \Tr_1(G) < \infty \right\}$.

We now prove the inequality~\eqref{eq:ineq:kineticGamma_G}.  According to~\eqref{eq:def:kineticEnergy_perUnitSurface} we have
\[
\frac12 \VTr_3( - \Delta_3 \gamma) = \frac12 \dfrac{1}{(2 \pi)^2} \int_{\R^2} | \bk |^2 \Tr_1(\gamma_\bk) \rd \bk + \frac12 \Tr_1(-\Delta_1 G_\gamma).
\]
Unfortunately, the first term cannot be expressed directly in terms of the operator $G_\gamma$. We only have an inequality for this term. Since $G_\gamma$ is trace-class, it is compact, and it has a spectral decomposition of the form
\[
G_\gamma = \sum_{j = 1}^\infty g_j | \phi_j \rangle \langle \phi_j |, 
\]
where $(\phi_j)_j$ is an orthonormal basis of $L^2(\R)$, composed of eigenvectors of $G_\gamma$ with $g_j \ge 0$ and $\sum g_j < \infty$. We denote by 
\[
m_j(\bk) := \langle \phi_j, \gamma_\bk \phi_j \rangle.
\]
Since $0 \le \gamma_\bk \le 1$, we have  $0 \le m_j(\bk) \le 1$. In addition, from~\eqref{eq:def:G_gamma}, it follows that
\[
    \forall j \in \N, \quad g_j = \dfrac{1}{(2 \pi)^2} \int_{\R^2} m_j(\bk) \rd \bk.
\]
We deduce that
\begin{align*}
	& \dfrac{1}{2 (2 \pi)^2} \int_{\R^2} | \bk |^2 \Tr_1(\gamma_\bk) \rd \bk  =
	\sum_{j=1}^\infty  \dfrac{1}{2 (2 \pi)^2} \int_{\R^2} | \bk |^2 m_j(\bk) \rd \bk  \\
	& \qquad \ge 
	\sum_{j=1}^\infty \inf \left\{   \dfrac{1}{2 (2 \pi)^2} \int_{\R^2} | \bk |^2 m(\bk) \rd \bk, \quad 0 \le m(\bk) \le 1, \  \dfrac{1}{(2 \pi)^2} \int_{\R^2} m(\bk) \rd \bk = g_j  \right\}.
\end{align*}
According to the bathtube principle (see~\cite[Thm 1.14]{lieb2001analysis}), the last minimization problem admits a unique minimizer, of the form $\1 \left(| \bk | \le k_F \right)$. The value of the radius is found with the condition $\int_{\R^2} m = (2 \pi)^2 g_j$ and we get
\begin{equation} \label{eq:def:mjstar}
	m_j^*(\bk) := \1 \left( | \bk | \le 2 \sqrt{\pi g_j}   \right).
\end{equation}

For this value, we have
\[
\sum_{j=1}^\infty \dfrac{1}{2 (2 \pi)^2} \int_{\R^2} | \bk |^2 m_j^*(\bk) \rd \bk = \sum_{j=1}^\infty \pi g_j^2 = \pi \Tr_1 (G_\gamma^2),
\]
which proves~\eqref{eq:ineq:kineticGamma_G}. 

\medskip
Conversely, for any $G \in \cG$, if we write $G =\sum_{j=1}^\infty g_j | \phi_j \rangle \langle \phi_j |$ and set 
\begin{equation} \label{eq:def:opt_gamma}
	\cF \gamma^* \cF^{-1} := \int_{\R^2}^\oplus \gamma_\bk \rd \bk, 
	\quad \text{with} \quad
	\gamma_\bk := \sum_{j=1}^\infty m_j^*(\bk) | \phi_j \rangle \langle \phi_j |,
\end{equation}
we see that $G_{\gamma^*} = G$ (representability) and~\eqref{eq:ineq:kineticGamma_G} becomes an equality. This proves Theorem~\ref{th:gamma_G}.

\begin{remark} \label{rem:higher_dim}
	The proof in higher dimension $\R^{d+s}$ (see Remark~\ref{rem:general_case_inequality}) is similar. Indeed, set
	\[
	G = \dfrac{1}{(2 \pi)^s} \int_{\R^s} \gamma_\bk \rd \bk=\sum_{j=1}^\infty g_j | \phi_j \rangle \langle \phi_j |,
	\]
	and, for every $j\in\N$, the optimal $m_j^*$ is defined by 
	\[
	m_j^*(\bk) := \1 \left(| \bk | \le c_s g_j^{1/s}\right) 
	\quad \text{with} \quad
	c_s := 2 \pi \left( \frac{s}{\left| \SS^{s-1} \right|} \right)^{1/s}.
	\]
	This yields
	\[
	\sum_{j=1}^\infty \dfrac{1}{2 (2 \pi)^s} \int_{\R^s} | \bk |^2 m_j^*(\bk) \rd \bk = c_\TF(s) \Tr \left( G^{1 + \frac{2}{s}} \right),      
	\]
	where $c_\TF(s) := \frac{s}{s+2} \left(\frac{s}{\left| \SS^{s-1} \right|}\right)^{2/s} 2 \pi^2$ is the $s$-dimensional (spinless) Thomas--Fermi constant. 
\end{remark}

Thanks to the reduction of the kinetic energy, and reasoning as in the Thomas--Fermi section, we obtain that $\cI^\rHF_{3} = \cI^\rHF$.
This proves Theorem~\ref{th:rHF_equivalence}.


\subsection{Existence of minimizers for the reduced model: Proof of Theorem~\ref{th:main_rHF}}
\label{ssec:rHF_is_well_posed}

We now focus on the reduced rHF problem. 

\subsubsection{Existence and uniqueness of the minimizer}
Let us first prove that the minimization problem $\cI^\rHF$ is well-posed. We start by noting that $\cG$ is not empty. Indeed, for $\rho=\frac12\mu*e^{-\av{x}}$, we have $\av{ \sqrt{\rho}\rangle \langle \sqrt{\rho}}\in\cG$.  Let $(G_n) \subset \cG$ be a minimizing sequence satisfying $\Tr_1(G_n) = Z$. Then $\cE^\rHF(G_n)$ is bounded, and since it is the sum of three positive terms, there is $C \ge 0$ so that
\[
\Tr_1( - \Delta G_n) \le C, \quad \Tr_1 (G_n^2) \le C, \quad \text{and} \quad \cD_1(\rho_n - \mu) \le C,
\]
where we set $\rho_n := \rho_{G_n}$. In addition, $\norm{\rho_n}_{L^1}=Z$, 
thus $(\rho_n)$ is bounded in $L^1(\R)$.   
We then deduce that, up to a subsequence, still denoted by $(G_n)$ and $(\rho_n)$, we have the following weak-* convergences (we denote by $\fS_p := \fS_p(L^2(\R))$ the $p$ Schatten class with $\| G \|_{\fS_p}^p = \Tr_1(| G |^p)$)
\begin{align*}
	| \nabla | G_n | \nabla |  \hookrightarrow T \quad &\text{weakly-* in $\fS_1$}  \\
	G_n  \hookrightarrow G_* \quad & \text{weakly-* in $\fS_2 \cap \fS_1$} \\
	\rho_n  \hookrightarrow \rho_* \quad & \text{weakly in} \; L^1(\R),\\
	W_{\rho_n -\mu}  \hookrightarrow W \quad & \text{weakly in}\; L^2(\R)
\end{align*}
where $W_{\rho_n -\mu}$ is defined in Section~\ref{ssec:1dHartree}. By standard arguments, we have that $T = | \nabla | G_* | \nabla |$, that $\rho_* = \rho_{G^*}$, and that $W=W_{\rho_* -\mu}$. In particular, the last equality shows that $W_{\rho_* -\mu} \in L^2(\R)$. In particular, we have $(\rho_* - \mu) \in \cC$, which implies the neutrality $\int_\R (\rho_* - \mu) = 0$. Hence $\Tr(G)=Z$. Furthermore, we have $\cE^\rHF(G_*) \le \displaystyle\liminf_n \cE^\rHF(G_n)$, thus $G^*$ is a minimizer of $\cE^\rHF$.

This minimizer is unique, thanks to the {\em strict} convexity of $\cE^\rHF$ due to the $\Tr_1(G^2)$ term.


\subsubsection{Derivation of the Euler-Lagrange equations}
We now derive the Euler-Lagrange equations. In what follows, we denote by $G_*$ the unique minimizer of the $\cE^\rHF$ functional. For all $G \ge 0$ with $\Tr_1(G) = Z$, and all $0 \le t \le 1$, one has 
\[
\cE^\rHF( (1 - t) G_* + t G) \ge \cE^\rHF(G_*).
\]
This gives 
\begin{equation}\label{eq:cd-optimalite}
\Tr_1 \left( \left[ - \tfrac12 \Delta + 2 \pi  G_* + \Phi_* \right] (G - G_*)   \right) \ge 0,
\end{equation}
where $\Phi_*$ is the mean-field potential generated by $\rho_{G_*} - \mu$ (see Proposition~\ref{prop:1D-hartree}). Let us denote by 
\[
h := H_* +  2 \pi G_*, \quad \text{where} \quad 
H_* := - \frac12 \Delta + \Phi_*.
\]
Testing~\eqref{eq:cd-optimalite} over states of the form $G=Z\av{\phi\rangle  \langle \phi}$, with $\| \phi \|_{L^2} = 1$ shows that $h$ is bounded from below. Let $\lambda := \inf \sigma \left(h \right) > - \infty$. Since $\Tr_1(G) = Z = \Tr_1(G_*)$, for all $G\in\cG$, the inequality~\eqref{eq:cd-optimalite} can also be written as 
\[
\Tr_1((h - \lambda) G) \ge \Tr_1((h - \lambda)G_*),\qquad \forall G\in\cG.
\]
In particular, the minimization problem
$$
\inf\left\{ \Tr_1 \left( (h - \lambda) G \right), \; G \in\cG, \; \Tr_1(G) = Z   \right\}
$$
is well-posed and admits a unique minimizer, which is $G_*$. By definition of $\lambda$, the  above minimum is $0$. It follows that $\Tr_1((h-\lambda)G_*) = 0$, hence ${\rm Ran} \, G_* \subset {\rm Ker}(h- \lambda)$. In particular, $ {\rm Ker}(h- \lambda) \neq \{ 0 \}$, so that $\lambda$ is an eigenvalue of $h$. Moreover, we have
\[
(h-\lambda ) G_* = G_* (h - \lambda) = 0.
\]
Let us consider the spectral decomposition of $G_*$, of the form $G_* = \sum_{j=1}^\infty g_j | \phi_j  \rangle \langle \phi_j |$. Then, for all $j \geq 1$ with $g_j > 0$, we have $h \phi_j = \lambda \phi_j$, that is
\[
H_* \phi_j +  2 \pi g_j \phi_j = \lambda \phi_j.
\]
So $\phi_j$ is an eigenvector of $H_*$ with corresponding eigenvalue
\[
\varepsilon_j := \lambda - 2 \pi g_j. 
\]
As $g_j > 0$, then $g_j = \dfrac{1}{2 \pi} \left( \lambda -\varepsilon_j \right)_+$. This proves that $G_*$ is of the form
\begin{equation} \label{eq:decomposition_Gstar}
	G_* =  \sum_j \dfrac{1}{2 \pi} \left( \lambda -\varepsilon_j \right)_+ | \phi_j \rangle \langle \phi_j |,
	\quad \text{with} \quad
	H_* \phi_j = \varepsilon_j \phi_j.
\end{equation}
Conversely, if $\varepsilon < \lambda$ is an eigenvalue of $H_*$ with corresponding eigenvector $\psi$, one has
\[
0 = G_*(h - \lambda) \psi = G_* ( H_* - \lambda +  2 \pi  G_*) \psi
= 2 \pi G_* \left( G_* \psi \right) + (\varepsilon - \lambda) G_* \psi .
\]
Denoting by $\phi := G_* \psi$ and $g:= (\lambda - \varepsilon )/2 \pi > 0$, we have $G_*\phi=g\phi$. We claim that $\phi\neq 0$. Otherwise, we would have $h \psi = H_* \psi = \varepsilon \psi$, and $\varepsilon$ would be an eigenvalue of $h$, smaller that $\lambda < \inf \sigma(h)$, a contradiction. So $\phi$ is an eigenvector of $G_*$ with eigenvalue $g$.  According to the previous decomposition of $G_*$, $\phi$ corresponds to one of the $\phi_j$, up to a multiplicative factor. In other words, all eigenvalues of $H_*$ smaller than $\lambda$ are considered in the decomposition~\eqref{eq:decomposition_Gstar}. Besides, as $G_*$ is a compact operator, $H_*$ is a compact perturbation of $h$. Therefore $\sigma_{\rm ess}(H_*)=\sigma_{\rm ess}(h)$, and no essential spectrum of $H_*$ can lie below $\lambda$. We deduce that
\[
G_* = \dfrac{1}{2 \pi}  \left( \lambda - H_*\right)_+.
\]
This ends the proof of Theorem~\ref{th:main_rHF}. 

\subsection{Properties of the mean-field potential: Proof of Proposition~\ref{prop:Phi_rHF}} 
\label{ssec:props-V-rHF}

We finally prove some properties of the mean-field potential $\Phi_*$ in the case where $| x | \mu(x)  \in L^1(\R)$. First, we note that, for $x \ge 0$, 
\[
    \Phi_*(x)  = 2 \pi \int_{\R^+} (\rho - \mu)(y) \min \{ x, y \} \rd y \ge - 2 \pi \int_{\R^+} \mu(y) y \rd y > - \infty,
\]
and similarly for $x \le 0$, so $\Phi_*$ is bounded from below. Let us prove that $\Phi_*$ is also bounded from above. This will imply, as in the Thomas Fermi case that $| x | \rho(x) \in L^1(\R)$ and that (see Section~\ref{ssec:screening_TF})
\[
    \lim_{x \to +\infty} \Phi_*(x) = 4 \pi  \int_{\R^+} y  (\rho_* - \mu)(y) \rd y 
    \quad \text{and} \quad
    \lim_{x \to -\infty} \Phi_*(x) = - 4 \pi  \int_{\R^-}  y  (\rho_* - \mu)(y) \rd y.
\]
Assume that $\lim_{x \to +\infty} \Phi(x) = +\infty$. Then, we would also have $\int_0^\infty \rho_*(y) y \rd y= +\infty$. The contradiction comes from Agmon's estimates, which state that if the potential $\Phi(x)$ goes to $+\infty$, then the corresponding eigenvectors (hence the density $\rho$) are exponentially decaying. We provide a simple proof of Agmon's argument in the one-dimensional setting for completeness. This is not the optimal result, and we refer to the original work~\cite{Agmon} for details (see for instance the example at the end of Chapter 1 in~\cite{Agmon}).

\begin{lemma}\label{lem:agmon}
	Let $V \in L^1_{\rm loc}(\R)$ be a potential  bounded from below and let $E_0 \in \R$.  Assume that $\lim_{x \to +\infty} V(x) = +\infty$ and let $a \in \R$ be such $V(x) > E_0 + 1$ for all $x > a$. Then, there is $C > 0$ such that, for all $u \in H^1(\R)$ eigenvector of the operator $H = - \partial_{xx}^2  + V  $ associated with an eigenvector $E \le E_0$, we have
	\[
	\int_a^\infty \re^{(x - a)} | u  |^2(x) \rd x \le C.
	\]
\end{lemma}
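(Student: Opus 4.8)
The plan is to prove the estimate by the classical Agmon trick: conjugating the eigenvalue equation by an exponential weight and exploiting the positivity of $V-E$ to the right of $a$. I may assume $u$ is real (since $V$ is real, the real and imaginary parts of any eigenfunction are separately eigenfunctions for the same $E$) and normalized, $\norm{u}_{L^2(\RR)}=1$; let $M\ge 0$ be such that $V\ge -M$. The weight will be $\phi(x):=\tfrac12 (x-a)_+$, which is Lipschitz with $\phi'=\tfrac12$ on $(a,\infty)$ and $\phi'=0$ on $(-\infty,a)$, so that $v:=\re^{\phi}u$ satisfies $v^2=\re^{(x-a)}u^2$ on $(a,\infty)$; the target integral is then exactly $\int_a^\infty v^2$.

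First I would record the exponentially-weighted identity. Since $u\in H^1(\RR)$ is bounded and continuous and $V\in L^1_{\loc}$, the distributional equation gives $u''=(V-E)u\in L^1_{\loc}$, so $u'$ is absolutely continuous and integration by parts on bounded intervals is legitimate. Multiplying $-u''+(V-E)u=0$ by $\re^{2\phi}u$ and integrating over $(-R,S)$, the cross terms produced by $\phi'$ reorganize (after writing everything through $v$) into a perfect square, and crucially every contribution of $\phi''$ cancels, leaving
\begin{equation*}
\int_{-R}^{S}(v')^2 + \int_{-R}^{S}\bigl(V-E-(\phi')^2\bigr)v^2 = \bigl[\re^{2\phi}uu'\bigr]_{-R}^{S}.
\end{equation*}
On $(a,S)$ the hypothesis $V>E_0+1\ge E+1$ gives $V-E-(\phi')^2>1-\tfrac14=\tfrac34>0$, while on $(-R,a)$ one has $\phi'=0$ and $v=u$, so the integrand there is $(V-E)u^2\ge -(M+\av{E_0})u^2$. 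Dropping the nonnegative term $\int(v')^2$ and using $\norm{u}_{L^2}=1$ then yields
\begin{equation*}
\tfrac34\int_a^S v^2 \le \bigl[\re^{2\phi}uu'\bigr]_{-R}^{S} + (M+\av{E_0}).
\end{equation*}

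The main obstacle is the boundary term at infinity: a priori we do not know that $v\in L^2$ — which is precisely what we are trying to establish — so we cannot directly send $R,S\to\infty$. I would circumvent this by working with the bounded truncations $\phi_n:=\min\{\phi,n\}$, for which $v_n:=\re^{\phi_n}u\in L^2$ automatically and $(\phi_n')^2\le\tfrac14$ still holds, so the identity and the displayed inequality remain valid with $v$ replaced by $v_n$. Because $u,u'\in L^2(\RR)$ we have $uu'\in L^1(\RR)$, and $u$ is continuous with $u\to 0$ at $\pm\infty$; hence there exist sequences $R_j,S_k\to\infty$ along which $u(\pm\cdot)u'(\pm\cdot)\to 0$, and since $\re^{2\phi_n}$ is bounded the boundary term vanishes along them. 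Passing to these limits (using monotone convergence for the nonnegative left-hand side) gives $\int_a^\infty v_n^2\le \tfrac43(M+\av{E_0})$ uniformly in $n$. Finally, as $\phi_n\uparrow\phi$ pointwise, monotone convergence in $n$ produces
\begin{equation*}
\int_a^\infty \re^{(x-a)}\av{u}^2(x)\,\rd x \le \tfrac43\bigl(M+\av{E_0}\bigr)=:C,
\end{equation*}
a constant independent of the eigenpair $(u,E)$ with $E\le E_0$ among normalized eigenfunctions, as required.
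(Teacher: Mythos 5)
Your proof is correct, and it runs on the same Agmon mechanism as the paper's — an exponential weight of rate $\tfrac12$, which works precisely because $(\tfrac12)^2 = \tfrac14 < 1 \le V - E$ on $(a,\infty)$ — but the implementation is genuinely different, so a comparison is worthwhile. The paper never leaves the half-line: it tests the eigenvalue equation against $\psi^2 u$ with $\psi \in H^1_0((a,\infty))$, uses the identity $u'(\psi^2 u)' = \av{(\psi u)'}^2 - \av{\psi'}^2 u^2$ to get the clean inequality $\int \psi^2 u^2 \le \int \av{\psi'}^2 u^2$, and then substitutes $\psi = \re^{(x-a)/2}-1$ capped by a slowly decaying linear tail; compact support disposes of all boundary terms, and the resulting constant is absolute — in particular the hypothesis that $V$ is bounded below is never used in the estimate. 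You instead multiply by $\re^{2\phi}u$ on all of $\R$ with $\phi = \tfrac12(x-a)_+$, so you must (i) control the region $x<a$, which is where $V \ge -M$ and $E \le E_0$ enter and why your constant $\tfrac43(M+\av{E_0})$ depends on $M$ and $E_0$ while the paper's does not, and (ii) handle boundary terms at $\pm\infty$, which you do correctly by capping the weight at height $n$ and passing to the limit along sequences where $uu' \to 0$ (legitimate since $uu' \in L^1(\R)$), followed by two applications of monotone convergence. What your route buys: no fiddly ``for $\varepsilon$ small enough'' cutoff and no algebraic massaging of $\tfrac34 \re^{t} - 2\re^{t/2} + 1$; what the paper's buys: a constant independent of $\inf V$ and of $E_0$. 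One shared caveat that you make explicit and the paper leaves tacit: the uniform constant requires the normalization $\norm{u}_{L^2} \le 1$ (the paper invokes it silently in its final step $\int_a^L u^2 \le C$); without some normalization the statement, quantified over all eigenvectors, cannot hold, and normalized eigenfunctions are indeed what the lemma is applied to in the proof of Proposition~\ref{prop:Phi_rHF}.
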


\begin{proof}
	Let $\psi \in C^\infty_0(\R)$ and consider the test function $\phi := \psi^2 u$. The equation $\langle \phi, (H - E) u \rangle = 0$ becomes
	\[
	\int_{\R} u' (\psi^2 u)' + \int_{\R} (V - E) \psi^2 u^2  = 0.
	\]
	Together with the identity $u'  (\psi^2 u)' = | (\psi u)' |^2 - | \psi' |^2 u^2$, this gives
	\[
	\int_\R | \psi' |^2 u^2  =  \int_{\R} | (\psi u )' |^2 + \int_{\R} (V - E) \psi^2 u^2 \ge  \int_{\R} (V - E) \psi^2 u^2.
	\]
	Assume that $\psi$ has support in $(a, \infty)$. On this support, we have $(V - E) \ge 1$, so
	\begin{equation}\label{eq. local ineq}
		\int_\R \psi^2 u^2 \le \int_\R | \psi ' |^2 u^2.
	\end{equation}
	By density, this inequality remains valid for all $\psi \in H^1_0((a, \infty))$. We choose $\psi$ of the form
	\[
	\psi(x) 
	:= \begin{cases}
		\re^{\frac12(x-a) } - 1 & \quad \text{for} \ x \in [a, L) \\ 
		\eta(x):=\left[ \left( \re^{\frac12(L-a) } - 1\right) - \varepsilon (x - L) \right]_+ & \quad \text{on} \ [L, +\ii),
	\end{cases}
	\]
	which is continuous and slowly decaying to $0$ (at rate $\varepsilon$) on $[L, \infty)$. Applying \eqref{eq. local ineq}, we obtain
	\begin{align*}
		\int_{a}^{L}(\re^{\frac12(x-a)}-1)^2 u^2(x) \rd x + \int_{L}^{\infty}\eta^2(x) u^2(x) \rd x	\le \frac{1}{4}\int_{a}^{L} \re^{x-a} u^2(x) \rd x + \int_{L}^{\infty}\eta'(x)^2 u^2(x) \rd x.
	\end{align*}
     Taking $\varepsilon \to 0$, so that $\eta'(x) \to 0$ in $L^\infty(\R)$, we obtain
	\begin{align}\label{eq. loc eq2}
		\int_{a}^{L} \left( \frac{3}{4}\re^{(x-a)} -2\re^{\frac12(x-a)} +1 \right) u^2(x) \rd x \le 0.
	\end{align}   
	On the other hand, one can find $C>0$ such that
	\[  \frac{3}{4}\re^{(x-a)} -2\re^{\frac12(x-a)} +1 \ge \frac12(\re^{(x-a)} -C).\]
	So that, \eqref{eq. loc eq2} becomes
	\[ \int_{a}^{L}  \re^{(x-a)} u^2(x) \rd x \le C \int_{a}^{L} u^2(x) \rd x\le C. \]
	Letting $L \to \infty$ proves the result.
\end{proof}

In our case, we consider all eigenvalues below $E_0 = \lambda$. This gives
\[
\int_a^\infty \rho(x) \re^{ (x - a )} \rd x = \sum_{i} (\lambda - \varepsilon_i)_+ \int_a^\infty \re^{ (x - a )} | u_i |^2(x) \rd x \le C \sum_i (\lambda - \varepsilon_i)_+ = CZ.
\]
This contradicts the fact that $\int_0^\infty y \rho(y) = \infty$.

\subsection{Properties of the density: Proof of Proposition~\ref{prop:rhoG}}
\label{ssec:proof_rhoG}

\review{ 
In this section, we make the stronger assumption that $| x |^3 \mu(x) \in L^1(\R)$. Our goal is to prove that $G_*$ is finite rank. }

\review{Before we prove this point, we make some remarks. First, since $\Phi_*$ is continuous and have some well-defined limits at $\pm \infty$, we have $\Sigma_{\rm ess} := \min \sigma_{\rm ess}(H_*) =  \min \{ \Phi_*(-\infty), \Phi_*(+\infty) \}$. Recall that $\lambda \le \Sigma_{\rm ess}$. The eigenvalues of $H_*$ can only accumulate at $\Sigma_{\rm ess}$, so if $\lambda <  \Sigma_{\rm ess}$, then $G_*$ is automatically finite-rank. On the other hand, if $G_*$ is finite rank, it is of the form $G_* = (2 \pi)^{-1} \sum_{j=1}^J (\lambda - \varepsilon_j) | \phi_j \rangle \langle \phi_j |$ (see \eqref{eq:decomposition_Gstar}) with $\varepsilon_j < \lambda$ for all $1 \le j \le J$. Then, by Agmon estimates, all eigenvectors of $G_*$ are exponentially decaying, hence so is $\rho_*$.}

\review{Unfortunately, we were not able to fully prove that $G_*$ is always finite rank. We can prove this fact whenever $| x |^3 \rho_*(x) \in L^1(\R)$ as well. In other words, if $\rho_*$ is not exponentially decaying, it has to be slowly decaying (no finite moment of order $3$). Although we expect that we always have $| x |^3 \rho_*(x) \in L^1(\R)$, (this is somehow confirmed by the numerical simulations), we could not find an argument for it.}

\review{Assume $| x |^3 \rho_*(x) \in L^1(\R)$, so that $f := \rho_* - \mu$ satisfy $| x |^3 | f |(x) \in L^1(\R)$ and $\int_\R f = 0$. We want to bound the number of negative eigenvalues of $H_* - \lambda = - \Delta + V$, with $V(x) := \Phi_*(x) - \lambda$. First, by the min-max theorem, it is enough to bound the number of negative eigenvalues of $- \Delta - V_-$, with $V_- := \max \{ 0, - V\}$. Using the first expression in~\eqref{eq:expression_Phi}, and the fact that $\lambda \le \Phi_*(+ \infty)$, we have, for $x > 0$, that
\[
     V(x)_- \le  \left| \Phi_*(+ \infty) - \Phi_*(x) \right| \le 4 \pi \int_0^\infty \left| t - \min \{ x, t \} \right| \cdot | f |(t) \rd t = 4 \pi \int_x^\infty (t - x) | f | (t) \rd t.
\]
Together with Fubini, this gives
\[
    \int_{\R^+} | x | V(x)_- \rd x \le 4 \pi \iint_{0 < x < t} x (t - x) | f |(t) \rd t \rd x = \frac{2 \pi}{3} \int_0^\infty t^3 | f |(t) \rd t < \infty.
\]
One has a similar estimate for the integral over $\R^-$, and we deduce that $\int_\R | x | V(x)_- \rd x < \infty$. One can now apply a Bargmann type bound~\cite[eq. 27]{chadan}, which states that $-\Delta - V_-$ has a finite number of negative eigenvalues. This concludes the proof. }

\section{Numerical illustrations}
\label{sec:numerical}

In this section, we provide numerical results for our reduced one-dimensional models. 

\subsection{Numerical setting and self-consistant procedure}
We use a simple code with finite differences: all functions are evaluated on a fine grid representing some interval $[-a, a]$, with $N_b$ points (we took $a = 15$ and $N_b = 5001$). The operator $-\Delta_1$ is computed in Fourier space (using sparse matrices). Given a neutral function $f$, we evaluate the potential $\Phi_f$ using~\eqref{eq:def:Vf}.

To solve the Thomas--Fermi problem, we use the following iterations. Recall that the optimal density $\rho_\TF$ satisfies the Euler-Lagrange equation~\eqref{eq:TF-EL}, that we write in the form
\[
    \rho_\TF  = \left( \frac{3}{5 c_\TF} \left( \Phi_\TF - \lambda \right)_+  \right)^{3/2}.
\]
At the point $\rho_n$, we set $\Phi_n = \Phi_{\rho_n - \mu}$, and compute $\lambda_n \in \R$ so that
\[
    Z_n(\lambda_n) = Z, \quad \text{where} \quad Z_n(\lambda ) := \int_{\R} \left(  \frac{3}{5 c_\TF} \left( \Phi_n - \lambda \right)_+  \right)^{3/2}.
\]
Since the map $Z_n(\cdot)$ is decreasing, one can efficiently compute $\lambda_n$ using a simple dichotomy. We then set
\[
    \widetilde{\rho_{n+1}}  :=  \left( \frac{3}{5 c_\TF} \left( \Phi_n - \lambda_n \right)_+  \right)^{3/2}, \quad \text{and} \quad
    \rho_{n+1} := t  \widetilde{\rho_{n+1}} + (1 - t) \rho_n,
\]
where $t$ is optimized to lower the Thomas--Fermi energy on the segment $[0, 1]$ (linear search).

To solve the rHF problem, we use a similar iterative procedure. Recall that $G_*$ satisfies the Euler-Lagrange equations
\[
    G_* = \frac{1}{2 \pi} \left( \lambda - H_* \right)_+.
\]
At point $G_n \in \cG$, we set $H_n := - \frac12 \Delta_1 + \Phi_n$ and find $\lambda_n$ so that 
\[
        \Tr_1 \left(  \frac{1}{2 \pi} \left( \lambda_n - H_n \right)_+  \right)= Z.
\]
Again, this can be solved using a dichotomy method. We then set
\[
    \widetilde{G_{n+1}}  := \frac{1}{2 \pi} \left( \lambda_n - H_n \right)_+, \quad \text{and} \quad
    G_{n+1} := t  \widetilde{G_{n+1}} + (1 - t) G_n,
\]
For the optimization problem in $t \in [0, 1]$, we note that the map $t \mapsto \cE^\rHF( t \widetilde{G_{n+1}} + (1 - t) G_n)$ is quadratic in $t$, so the best $t \in [0, 1]$ is explicit.

Although we believe that more sophisticated methods can be designed to study these problems, our numerical methods for both the TF and rHF problems seem to converge quite fast, which is enough for our purpose.

\subsection{Numerical results}

We now provide some numerical results in three test cases. In order to compare the (spinless) rHF and TF results, we use the spinless $c_\TF$ constant $c_\TF := \frac{3^{5/3} \pi^{4/3} }{2^{1/3} 5}$ for the TF model (see Remark~\ref{rem:higher_dim}).

\subsubsection{Case 1: a simple slab}
For our first test case, we consider the charge 
\[
    \mu_1(x) := \1(| x | < 2),
\]
which models an homogeneous charged slab having some width. The results are displayed in Figure~\ref{fig:mu1}. In the (A) part, we display the function $\mu_1$, together with the best rHF density $\rho_*$ and the generated potential $\Phi_*$. The potential is scaled by a factor $10$ for ease of reading. We do not display the TF results here, as they are very similar to the rHF ones. In the (B) part, we display the difference between the TF results and the rHF ones, and we plot $\rho_\TF - \rho_*$ and $\Phi_\TF - \Phi_*$. Although these two densities are not equal, they are very close, and they only differ by around $1\%$. In addition, this difference seems to be even lower far from the slab ($| x |$ large). The potentials $\Phi_\TF$ and $\Phi_*$ differ by a small constant as they are defined as the unique function solving the Poisson equation $\Phi''=-4\pi (\rho-\mu)$ with the conditions $\Phi'\to 0$ at $\pm\infty$ and $\Phi(0)=0$. 

The optimal operator $G_*$ that we have numerically found has 15 positive eigenvalues, 4 of which being smaller than $10^{-9}$. The remaining 11 other eigenvalues are greater than $10^{-3}$, and the larger is around $1.07$ (which shows that $G$ can have eigenvalues greater than $1$).

\begin{figure}[H]
    \centering
    \begin{subfigure}[b]{0.45\textwidth}
        \includegraphics[width=\textwidth]{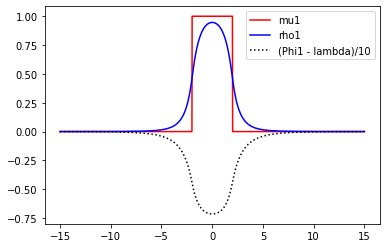}
        \caption{$\mu_1$ (red), $\rho_*$ (blue) and $(\Phi_* - \lambda_*)/10$ (dotted black).}
    \end{subfigure}
    \begin{subfigure}[b]{0.45\textwidth}
        \includegraphics[width=\textwidth]{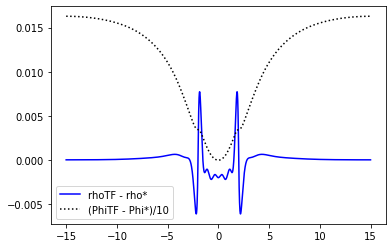}
        \caption{The difference $\rho_{\TF} - \rho_*$ (blue) and $\Phi_\TF - \Phi_*$ (dotted black).}
    \end{subfigure}
\caption{Results for the slab $\mu_1$.}
\label{fig:mu1}
\end{figure}

\subsubsection{Case 2: two slabs}
For our second test case, we consider two different slabs. We take the charge 
\[
    \mu_2(x) := \1(-5 < x < -2) + 2 \cdot \1(1 < x < 3).
\]
The main difference with the previous case is that $\mu_2$ has a non null dipolar moment. The results are displayed in Figure~\ref{fig:mu2}. Again, the TF results are very close to the rHF ones. This is surprising, since we expected the rHF model to exhibit some (screened) dipolar moment. However, we found numerically dipolar moment of order $2 \cdot 10^{-2}$ for the rHF case, and of order $1 \cdot 10^{-2}$ for the TF one. We believe that they come from numerical errors: recall that we are working in the finite box $[-15, 15]$, and that we expect $\Phi_\TF - \lambda$ to decay as $-c_1 | x |^{-4}$. 

We were not able to prove that the dipolar moment should vanish also in the rHF case. Still, even though it does not vanish, we believe that it can always be neglected.

The optimal operator $G_*$ in this case has rank $17$, with two eigenvalues of order $10^{-11}$, and the remaining ones of order $10^{-3}$ to $1$. The highest eigenvalue is around $1.44$.

\begin{figure}[H]
    \centering
    \begin{subfigure}[b]{0.45\textwidth}
        \includegraphics[width=\textwidth]{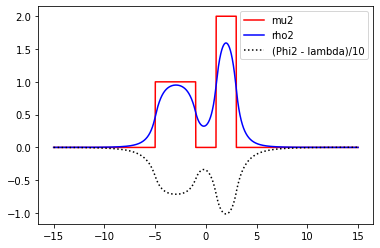}
        \caption{$\mu_2$ (red), $\rho_*$ (blue) and $(\Phi_* - \lambda_*)/10$ (dotted black).}
    \end{subfigure}
    \begin{subfigure}[b]{0.45\textwidth}
        \includegraphics[width=\textwidth]{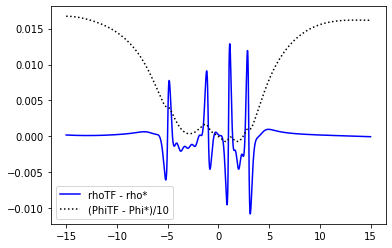}
        \caption{The difference $\rho_{\TF} - \rho_*$ (blue) and $\Phi_\TF - \Phi_*$ (dotted black).}
    \end{subfigure}
    \caption{Results for the step function $\mu_2$.}
    \label{fig:mu2}
\end{figure}

\subsubsection{Case 3: two slabs, smooth case}

Finally, we study the case where $\mu$ is smooth. We took
\[
    \mu_3(x) = \re^{ - \frac14 {(x+2)^2}} +  2 \cdot \re^{ - {(x-2)^2}}.
\]
The charge density models two slabs having non-homogeneous charge in the $x_3$-direction. The results are displayed in Figure~\ref{fig:mu3}. Again, the densities and mean-field potentials are very close. Actually, it seems that, due to smoothness, the difference $\rho_\TF - \rho_*$ is now of order $0.1\%$. In particular, it seems that the TF model is a very good approximation of the rHF one. 

The optimal operator $G_*$ in this case has rank $19$, with $4$ eigenvalues below $10^{-10}$, and the remaining ones above $10^{-3}$. The highest eigenvalue is $1.32$.

\begin{figure}[H]
    \centering
    \begin{subfigure}[b]{0.45\textwidth}
        \includegraphics[width=\textwidth]{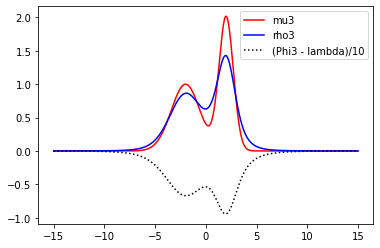}
        \caption{$\mu_3$ (red), $\rho_*$ (blue) and $(\Phi_* - \lambda_*)/10$ (dotted black).}
    \end{subfigure}
    \begin{subfigure}[b]{0.45\textwidth}
        \includegraphics[width=\textwidth]{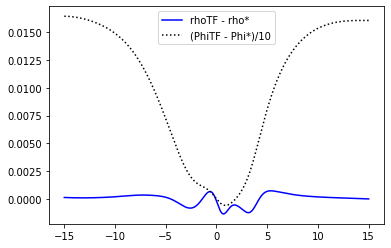}
        \caption{The difference $\rho_{\TF} - \rho_*$ (blue) and $\Phi_\TF - \Phi_*$ (dotted black).}
    \end{subfigure}
    \caption{Results for the step function $\mu_3$.}
    \label{fig:mu3}
\end{figure}

\subsection{Perspectives}

In view of these numerical results, we believe that, even for  more complex Kohn--Sham models, one can always approximate the reduced kinetic energy by the Thomas Fermi one. 

More specifically, starting from a three-dimensional Kohn--Sham (KS) model of the form
\[
    \cE^{\rm KS}_3(\gamma) := \frac12 \Tr_3(- \Delta \gamma) + \frac12 \cD_3(\rho_\gamma - \mu) + E_{3}^{\rm xc}(\rho_\gamma),
\] 
and assuming that the charge $\mu$ depends only on the third variable, one can assume that the optimal $\gamma$ will again satisfy~\eqref{eq:assumption_gamma} (although symmetry breaking can now happen due to the non-convexity of the models). If this is the case, then one can perform the same analysis as before, and obtain a reduced one-dimensional energy per unit surface, of the form
\[
    \cE^{\rm KS}(G) := \frac12 \Tr_1(G) + \pi \Tr_1(G^2) + \frac12 \cD_1(\rho_G - \mu) + E_1^{\rm xc}(\rho_G),
\]
where $E^{\rm xc}_1$ is the exchange correlation energy {\em per unit surface}. The corresponding Thomas--Fermi model is
\[
    \cE^{\rm KS, TF}(\rho) := c_\TF \int_{\R} \rho^{5/3}  + \frac12 \cD_1(\rho - \mu) + E_1^{\rm xc}(\rho).
\]
We believe that the optimal TF density $\rho_\TF$ is always very close to the optimal KS one $\rho_{\rm KS}$. The advantage is that $\cE^{\rm KS, TF}$ is easier to optimize numerically, and to study theoretically.

\appendix

\section{A Lieb-Thirring inequality}
\label{sec:LT}

In this section, we explain how to use Theorem~\ref{th:gamma_G} to obtain a Lieb-Thirring type inequality~\cite{LieThi-75, LieThi-76}. We state our result in the general dimension $d\in \N\setminus\set{0}$. 

\begin{proposition} \label{prop:LT}
    Let $d\in\N\setminus\set{0}$ and $G\in\cS(L^2(\R^d))$ be a positive operator. Then, for any $s\in\N\setminus\set{0}$,
    \begin{equation} \label{eq:LTtype}
    \boxed{
        K \left( \int_{\R^d} \rho_G^{1 + \frac{2}{d+s}}\right)^{1 + \frac{s}{d}} \le \left( \Tr_d( G^{1 + \frac{2}{s}}) \right)^{s/d} \Tr_d( - \Delta_d G)
    }
    \end{equation}
    with the constant
    \[
    K := \dfrac{ K_\LT(d+s)^{1 + \frac{s}{d}} }{ ( 2c_{\TF}(s))^{\frac{s}{d}} } \left( \dfrac{s}{d+s} \right)^{s/d} \dfrac{d}{d+s}.
    \]
    Here, $K_\LT(d+s)$ is the usual Lieb-Thirring constant in dimension $d+s$, that is the best constant in the inequality
    \begin{equation} \label{eq:usual_LT}
        \forall \gamma \in \cS\left( L^2(\R^{d+s})\right), \quad 0 \le \gamma \le 1, \quad K_\LT(d+s) \int_{\R^{d+s}} \rho_\gamma^{1 + \frac{2}{d+s}} \le \Tr_{d+s}( - \Delta \gamma).
    \end{equation}
\end{proposition}
\begin{proof}
Consider $G \in \cS(L^2(\R^d))$ , $G \ge 0$ such that $\Tr(G^{1 + \frac{2}{s}}) < \infty$ and consider the optimal $\gamma \in \cS(L^2(\R^{d+s}))$ as in~\eqref{eq:def:opt_gamma}. Then, the Lieb-Thirring inequality~\eqref{eq:usual_LT} applied to $\gamma$ gives (after the appropriate {\em per unit surface} normalization)
\[
\frac12 \Tr_d(- \Delta_d G) + c_{\TF}(s) \Tr_d(G^{1 + \frac{2}{s}}) = \frac12 \VTr_{d+s}(- \Delta_{d+s} \gamma) \ge \frac12 K_\LT(d+s) \int_{\R^d} \rho_\gamma^{1 + \frac{2}{d+s}},
\]
Since $\rho_\gamma = \rho_G$ then, for all $G \in \fS_{1 + \frac{2}{s}}(\R^d)$ such that $\Tr (- \Delta G) < \infty$, we get $\rho_G \in L^{1 + \frac{2}{d+s}}(\R^d)$, and
\[
\Tr_d(- \Delta_d G) + 2 c_{\TF}(s) \Tr_d(G^{1 + \frac{2}{s}}) \ge K_\LT(d+s) \int_{\R^d} \rho_G^{1 + \frac{2}{d+s}}.
\]
Performing the scaling $G_{\lambda} = \lambda G$ and optimizing over $\lambda$ gives the result.
\end{proof}
Proposition~\ref{prop:LT} corresponds to the Lieb-Thirring inequality for operators in $\cS(L^2(\R^{s+d}))$ in a semi-classical limit, when the semi-classical limit dilation is only performed in the first $s$ variables (see also~\cite{Martin1990new} for similar arguments).

This type of inequalities was recently studied in~\cite{Gontier2020}, where it is shown that for all $d \ge 0$ and $1 \le p \le 1 + \frac{2}{d}$, there is an optimal constant $K_{p,d}$ so that
\[
K_{p,d} \| \rho_G \|_p^{\frac{2p}{d(p-1)}} \le \| G \|_{\fS^{q}}^{\frac{p(2-d) + d}{d(p-1)}} \Tr_d (- \Delta_d G),
\quad \text{with} \quad
q := \dfrac{2p + d - dp}{2 + d - dp}.
\]
It is proved that this constant is the dual constant of the usual Lieb-Thirring constant $L_{\gamma, d}$ with $\gamma = q/(q-1)$. The case in Proposition~\ref{prop:LT} corresponds to the choice 
\[
p = 1 + \dfrac{2}{d+s} \quad \text{with} \quad s \in \N, \quad \text{so that} \quad q = 1 + \frac{2}{s}.
\]
This corresponds to the dual constant $L_{\gamma, d}$ with $\gamma = \frac{q}{q-1} = 1 + \frac{s}{2}$. In particular, since $s \ge 1$, we have $\gamma \ge \frac32$. In this regime, it is known that the best constant is the semi-classical one: $L_{\gamma, d} = L_{\gamma, d}^{\rm sc}$, hence $K_{p,d} = K_{p, d}^{\rm sc}$. This proves that the optimal constant $K$ in the inequality~\eqref{eq:LTtype} is the semi-classical one. In particular, we have 
\[
    \frac12 \Tr_d(- \Delta_d G) + c_{\TF}(s) \Tr_d(G^{1 + \frac{2}{s}})  \ge c_\TF(d+s) \int_{\R^d} \rho_G^{1 + \frac{2}{d+s}}.
\]
In other words, the energy in the rHF model is always greater than the energy in the TF model.



\end{document}